\title{Omega-Regular Decision Processes\thanks{This work was supported in part by the EPSRC through grants EP/X017796/1 and EP/X03688X/1, the NSF through grant CCF-2009022 and the NSF CAREER award CCF-2146563; and the 
EU's Horizon 2020 research and innovation programme \includegraphics[height=8pt]{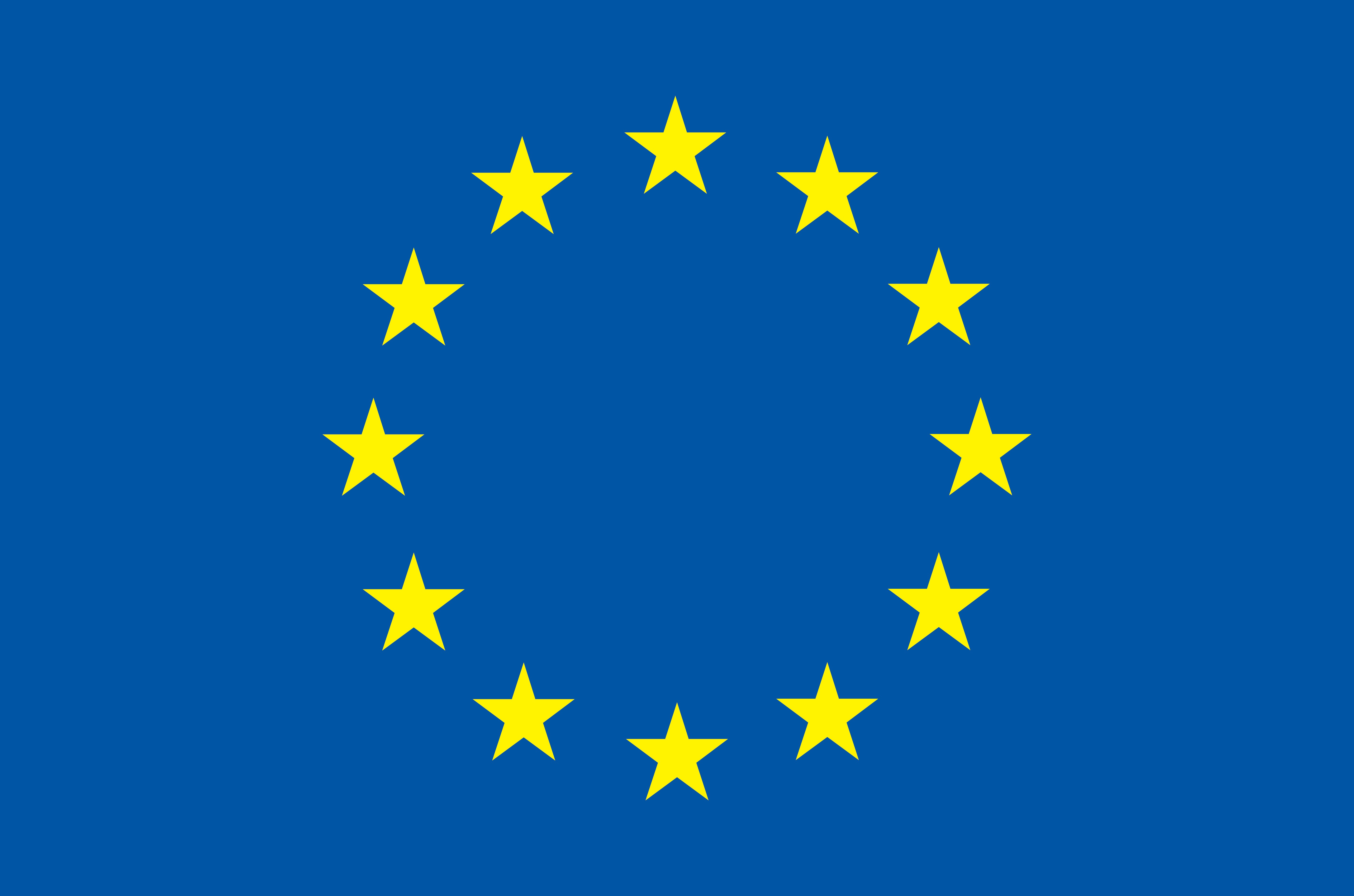}  under grant agreements No 864075 (CAESAR).}}
\author{
    E. M. Hahn\equalcontrib\textsuperscript{\rm 1}
    M. Perez\equalcontrib\textsuperscript{\rm 2},
    S. Schewe\equalcontrib\textsuperscript{\rm 3},
    F. Somenzi\equalcontrib\textsuperscript{\rm 2},
    A. Trivedi\equalcontrib\textsuperscript{\rm 2},
    D. Wojtczak\equalcontrib\textsuperscript{\rm 3}
}
\tikzset{
  >=stealth,
  box state/.style={draw,rectangle,rounded corners,fill=safecellcolor,minimum size=8mm},
  prob state/.style={draw,very thick,shape=circle,darkblue,minimum size=3mm,inner sep=0mm},
  node distance=2cm,on grid,auto, initial text=,
  every loop/.style={shorten >=0pt},
  accepting/.style={double distance=1.2pt, outer sep = 0.6pt+\pgflinewidth},
  accepting dot/.style={above=-2.5pt,circle,fill,darkgreen,inner sep=2pt,radius=1pt},
  loop above/.append style={every loop/.append style={out=120, in=60, looseness=6}},
  loop below/.append style={every loop/.append style={out=300, in=240, looseness=6}},
  loop left/.append style={every loop/.append style={out=210, in=150, looseness=6}},
  loop right/.append style={every loop/.append style={out=30, in=330, looseness=6}},
  accepting arc/.style={dashed},
  marked/.style={
    dashed,
    opacity=0.3
  },
  marked on/.style={alt=#1{marked}{}},
}
\definecolor{darkgreen}{rgb}{0,0.6,0}
\definecolor{lightblue}{rgb}{0.5,0.6,1.0}
\definecolor{lightgray}{rgb}{0.98,0.98,0.98}
\definecolor{mauve}{rgb}{0.58,0,0.82}
\definecolor{sienna}{rgb}{0.6,0.18,0.09}
\colorlet{darkblue}{blue!60!black}
\colorlet{darkred}{red!50!black}
\colorlet{safecellcolor}{yellow!5}
\colorlet{goodcellcolor}{green!10}
\colorlet{badcellcolor}{blue!10}
\DeclareMathOperator{\infi}{inf}
\newcommand{\DIST}{{\mathcal D}}
\newcommand{\Ll}{\mathcal{L}}
\newcommand{\Aa}{\mathcal{A}}
\newcommand{\Mm}{\mathcal{M}}
\newcommand{\Nn}{\mathcal{N}}
\newcommand{\Ff}{\mathcal{F}}
\newcommand{\Rr}{\mathcal{R}}
\newcommand{\FRuns}{\mathit{FRuns}}
\newcommand{\Runs}{\mathit{Runs}}
\DeclareMathOperator{\last}{\mathit{last}}
\newcommand{\set}[1]{\left\{ #1 \right\}}
\newcommand{\seq}[1]{\langle #1 \rangle}
\DeclareMathOperator{\supp}{\mathit{supp}}
\newcommand{\eE}{\mathbb E}
\newcommand{\Real}{\mathbb R}
\DeclareMathOperator{\EDisct}{\mathsf{EDisct}}
\newcommand{\rank}{\mathsf{rank}}
\newcommand{\comp}{\mathsf{comp}}
\definecolor{Gray}{gray}{0.85}
\definecolor{LightCyan}{rgb}{0.88,1,1}
\def\run{\rho}
\newcommand{\Strat}{\Pi}
\newcommand{\vStrat}{\overline{\Pi}}
\def\rmdef{\stackrel{\mbox{\rm {\tiny def}}}{=}}
\newtheorem{theorem}{Theorem}
\newtheorem{proposition}{Proposition}
\newtheorem{corollary}{Corollary}
\newtheorem{lemma}{Lemma}
\newtheorem{definition}{Definition}
\newtheorem{example}{Example}
\begin{document}

\maketitle
\begin{abstract}
Regular decision processes (RDPs) are a subclass of non-Markovian decision processes where the transition and reward functions are guarded by some regular property of the past (a \emph{lookback}). 
While RDPs enable intuitive and succinct representation of non-Markovian decision processes, their expressive power coincides with finite-state Markov decision processes (MDPs).
We introduce omega-regular decision processes (ODPs) where the non-Markovian aspect of the transition and reward functions are extended to an $\omega$-regular lookahead over the system evolution.
Semantically, these lookaheads can be considered as \emph{promises} made by the decision maker or the learning agent about her future behavior.
In  particular, we assume that, if the promised lookaheads are not met, then the payoff to the decision maker is $\bot$ (least desirable payoff), overriding any rewards collected by the decision maker.
We enable optimization and learning for ODPs under the discounted-reward objective by reducing them to lexicographic optimization and learning over finite MDPs.
We present experimental results demonstrating the effectiveness of the proposed reduction.

\end{abstract}

\section{Introduction}
\label{sec:intro}
Markov decision processes (MDPs) are canonical models to express decision making under uncertainty, where the optimization objective is defined as a discounted sum of scalar rewards associated with various decisions. 
The optimal value and the optimal policies for MDPs can be computed efficiently via dynamic programming~\cite{Put94}. 
When the environment is not explicitly known but can be sampled in repeated interactions, reinforcement learning (RL)~\cite{Sutton18} algorithms combine stochastic approximation with dynamic programming to compute optimal values and policies.
RL, combined with deep learning~\cite{Goodfe16}, has emerged as a leading human-AI collaborative programming paradigm generating novel and creative solutions with ``superhuman'' efficiency~\cite{AlphaGo,wurman2022outracing,mirhoseini2020chip}.
A key shortcoming of this approach is the difficulty of translating designer's intent into a suitable reward signal.  
To help address this problem,
we extend MDPs with an expressive modeling primitive---called \emph{promises}---that improves the communication between the learning agent and the programmer. 
We dub these processes \emph{$\omega$-regular decision processes} (ODPs).

\paragraph{Motivation.}
A key challenge in posing a decision problem as an MDP is to define a scalar \emph{reward signal} that is Markovian (history-independent) on the state space. 
While some problems, such as reachability and safety, naturally lend themselves to a reward-based formulation, such an interface is often cumbersome and arguably error-prone. 
This difficulty has been well documented, especially within the RL literature, under different terms including \emph{misaligned specification}, \emph{specification gaming}, and \emph{reward hacking}~\cite{pan2022effects,amodei2016concrete,yuan2019novel,skalse2022defining,Coastrunners}.

To overcome this challenge, automata and logic-based reward gadgets---such as reward machines, $\omega$-regular languages, and LTL---have been proposed to extend the MDP in the context of planning~\cite{Baier08} and, more recently, of RL~\cite{RM1,camacho2019ltl,Sadigh14,Hahn19,Fu14}.
In these works, an interpreter provides a reward for the actions of the decision maker by monitoring the action sequences with the help of the underlying reward gadget.
While such reward interface is convenient from the programmer's perspective, it limits the agency of the decision maker in claiming rewards for her actions by making it opaque.

Brafman \emph{et al.}~\cite{Brafm19} initiated a formal study of non-Markovian MDPs in the planning setting, and proposed regular decision processes (RDPs) as a tractable representation of non-Markovian MDPs.
Abadi \emph{et al.}~\cite{Abadi21} extended this work by combining Mealy machine learning with RL. 
In an RDP, the agent can choose a given action and collect its associated reward as long as the partial episode satisfies a certain regular property provided as the \emph{guard} to that action.
This modeling feature allows/expects the agent to keep in memory some regular information about the past in order to choose her actions optimally. 
Appending the MDPs with retrospective memory allows for a succinct and transparent modeling. On the other hand, adding memory as a regular language does not increase the expressive power of MDPs and RDPs can be compiled into finite MDPs~\cite{Abadi21} recovering the tractability of optimization and learning.

\begin{figure*}[t]
  \centering
  \begin{minipage}[c]{0.58\linewidth} 
  \begin{tabular}{c|c|c|c}
    where & LTL formula & type & reward \\\hline
    \textcolor{purple!80!black}{\faHome} & $(\always\eventually \mathtt{clean\_lab}) \wedge (\always\eventually \mathtt{dirty\_lab})$ & promise & $0$ \\
    \textcolor{purple!80!black}{\faHome} & $\eventually\always \neg \mathtt{initial\_location}$ & promise & $-\xi$ \\
    \textcolor{red!80!black}{\faFlask} & $\neg \mathtt{dirty\_lab} \since \mathtt{clean\_lab}$ & guard & $\rho$ \\
    \tikz\draw[fill=red!8] (0,0) rectangle (0.3,0.3); & $\mathtt{decontamination} \releases \neg \mathtt{clean\_lab}$ & promise & $0$ \\
    \textcolor{green!60!black}{\faEraser} & $\mathtt{true}$ & & $-f_1$ \\
    \textcolor{magenta}{\faEraser} & $\mathtt{true}$ & & $-f_2$
  \end{tabular}
  \end{minipage}
  \begin{minipage}[c]{0.4\linewidth}
      \tikzset{
  strat/.pic={
    \fill [fill opacity=0.75,rotate around={#1:(0,0)}] (0.1,-0.1) -- (0.1,0.1) -- (0.6,0.1) --
    (0.6,0.3) -- (0.95,0) -- (0.6,-0.3) -- (0.6,-0.1) --cycle;
  }
}

  \begin{tikzpicture}[scale=0.2777777777777778,transform shape]
    \edef\e{0}
    \edef\n{90}
    \edef\w{180}
    \edef\s{270}

    \fill[yellow!5] (0,0) rectangle (24,18);
    \fill[red!5] (6,6) -- ++(6,0) -- ++(0,12) -- ++(-12,0) -- ++(0,-6) -- ++(6,0)-- cycle;
    \foreach \y in {0,...,11} {
      \path (1+2*\y,-0.5) node {\Huge$\y$};
    }
    \draw[very thick] (0,6) -- ++(2,0) ++(2,0) -- ++(10,0) ++(2,0) -- ++(4,0) ++(2,0) -- ++(2,0);
    \draw[very thick] (0,12) -- ++(2,0) ++(2,0) -- ++(4,0) ++ (2,0) -- ++(4,0) ++(2,0) -- ++(4,0) ++(2,0) -- ++(2,0);
    \foreach \y in {6,12,18} {
      \draw[very thick] (\y,0) -- ++(0,2) ++(0,2) -- ++(0,10) ++(0,2) -- ++(0,2);
    }
    \foreach \y in {0,...,8} {
      \path (-0.5,1+2*\y) node {\Huge$\y$};
    }
    \node[scale=1.2] (home)    at (3,3)  {\Huge\textcolor{purple!80!black}{\faHome}};
    \node[scale=1.2] (wrench)  at (13,7) {\Huge\textcolor{blue}{\faFlask}};
    \node[scale=1.2] (flask)   at (11,7) {\Huge\textcolor{red!80!black}{\faFlask}};
    \node[scale=1.5] (eraser1) at (1,9)  {\Huge\textcolor{green!60!black}{\faEraser}};
    \node[scale=1.5] (eraser2) at (9,3)  {\Huge\textcolor{magenta}{\faEraser}};
    \node[scale=1.5] (bolt)    at (15,6) {\Huge\textcolor{orange}{\faBolt}};
    \draw[xstep=24cm,ystep=18cm,ultra thick,cap=round] (0,0) grid (24,18);
    \draw[step=2cm,ultra thin,gray] (0,0) grid (24,18);
    \pic[orange] at (3,3) {strat=\e};
    \pic[orange] at (5,3) {strat=\e};
    \pic[orange] at (7,3) {strat=\n};
    \pic[orange] at (7,5) {strat=\e};
    \pic[orange] at (9,5) {strat=\e};
    \pic[orange] at (11,5) {strat=\s};
    \pic[orange] at (11,3) {strat=\e};
    \pic[orange] at (13,3) {strat=\e};
    \pic[orange] at (15,3) {strat=\e};
    \pic[orange] at (17,3) {strat=\e};
    \pic[orange] at (19,3) {strat=\n};
    \pic[orange] at (19,5) {strat=\e};
    \pic[orange] at (21,5) {strat=\n};
    \pic[orange] at (21,7) {strat=\n};
    \pic[orange] at (21,9) {strat=\n};
    \pic[orange] at (21,11) {strat=\n};
    \pic[orange] at (21,13) {strat=\n};
    \pic[orange] at (21,15) {strat=\w};
    \pic[orange] at (19,15) {strat=\w};
    \pic[orange] at (17,15) {strat=\s};
    \pic[orange] at (17,13) {strat=\w};
    \pic[orange] at (15,13) {strat=\s};
    \pic[orange] at (15,11) {strat=\w};
    \pic[orange] at (13,11) {strat=\s};
    \pic[orange] at (13,9) {strat=\s};
    \pic[orange] at (13,7) {strat=\n};
    \pic[blue] at (13,9) {strat=\s};
    \pic[blue] at (13,7) {strat=\n};
    \pic[magenta] at (13,9) {strat=\n};
    \pic[magenta] at (13,11) {strat=\e};
    \pic[magenta] at (15,11) {strat=\n};
    \pic[magenta] at (15,13) {strat=\n};
    \pic[magenta] at (15,15) {strat=\w};
    \pic[magenta] at (13,15) {strat=\w};
    \pic[magenta] at (11,15) {strat=\s};
    \pic[red!80!black] at (11,13) {strat=\w};
    \pic[red!80!black] at (9,13) {strat=\s};
    \pic[red!80!black] at (9,11) {strat=\s};
    \pic[red!80!black] at (9,9) {strat=\e};
    \pic[red!80!black] at (11,9) {strat=\s};
    \pic[red!80!black] at (11,7) {strat=\n};
    \pic[green!60!black] at (11,9) {strat=\n};
    \pic[green!60!black] at (11,11) {strat=\w};
    \pic[green!60!black] at (9,11) {strat=\n};
    \pic[green!60!black] at (9,13) {strat=\n};
    \pic[green!60!black] at (9,15) {strat=\w};
    \pic[green!60!black] at (7,15) {strat=\w};
    \pic[green!60!black] at (5,15) {strat=\s};
    \pic[green!60!black] at (5,13) {strat=\w};
    \pic[green!60!black] at (3,13) {strat=\s};
    \pic[green!60!black] at (3,11) {strat=\s};
    \pic[green!60!black] at (3,9) {strat=\w};
    \pic[green!60!black] at (1,9) {strat=\e};
    \pic[blue] at (3,9) {strat=\s};
    \pic[blue] at (3,7) {strat=\s};
    \pic[blue] at (3,5) {strat=\e};
    \pic[blue] at (5,5) {strat=\s};
    \pic[blue] at (5,3) {strat=\e};
    \pic[blue] at (7,3) {strat=\n};
    \pic[blue] at (7,5) {strat=\e};
    \pic[blue] at (9,5) {strat=\e};
    \pic[blue] at (11,5) {strat=\s};
    \pic[blue] at (11,3) {strat=\e};
    \pic[blue] at (13,3) {strat=\e};
    \pic[blue] at (15,3) {strat=\e};
    \pic[blue] at (17,3) {strat=\e};
    \pic[blue] at (19,3) {strat=\e};
    \pic[blue] at (21,3) {strat=\n};
    \pic[blue] at (21,5) {strat=\n};
    \pic[blue] at (21,7) {strat=\n};
    \pic[blue] at (21,9) {strat=\n};
    \pic[blue] at (21,11) {strat=\n};
    \pic[blue] at (21,13) {strat=\n};
    \pic[blue] at (21,15) {strat=\w};
    \pic[blue] at (19,15) {strat=\w};
    \pic[blue] at (17,15) {strat=\s};
    \pic[blue] at (17,13) {strat=\w};
    \pic[blue] at (15,13) {strat=\s};
    \pic[blue] at (15,11) {strat=\s};
    \pic[blue] at (15,9) {strat=\w};
    \pic[blue] at (13,9) {strat=\s};
    \pic[blue] at (13,7) {strat=\n};
  \end{tikzpicture}
  \end{minipage}
  \caption{A grid-world model of a biological lab with clean and dirty areas. The strategy shown here is computed by RL based on the method proposed in this paper.  
    The rewards satisfy $\xi > 0$ and $0 < f_1 < f_2 < \rho$.  Promises and guards are specified in LTL \cite{Pnueli77}, with future and past operators.  $\releases$ denotes the \emph{releases} operator: decontamination removes the constraint $\neg \mathtt{clean\_lab}$.  $\since$ denotes the \emph{since} operator: the robot comes from the clean lab and has not been to the dirty lab since.}
  \label{fig:biolab}
\end{figure*}
As a dual capability to the retrospective memory, we propose extending the RDP framework with the ``prospective memory''~\cite{mcdaniel2007prospective} (also known as memory for intentions) to allow the agent to make promises about the future behavior and collect rewards based on this promise. 
We posit that such an abstraction will allow the agent to declare her intent to the environment and collect reward, and will result in more explainable and transparent behavior. 
This is the departure point for $\omega$-regular decision processes, which we now introduce with the help of the following example.
We note that while this example is little busy, it showcases multiple features of our framework.

\begin{example}
Consider the grid world shown in Fig.~\ref{fig:biolab}, where a robot has to repeatedly visit two labs, one ``clean'' (blue) and one ``dirty'' (red).  Whenever the robot passes through the dirty area---highlighted with a rose background---it has to visit a decontamination station (in one of the two cells marked with an eraser) before it can re-enter the clean lab.  Every time the robot visits the dirty lab, it collects a reward if it just arrived from the clean lab.

The two decontamination stations charge different fees.  The cheaper one requires a detour from the shortest route.  Both charge less than the robot earns by visiting the two labs.  The clean lab has two doors.  The one on the south side, however, is equipped with a ``zapper'' that has to be disabled on first crossing.  If the robot manages to disable the zapper, it secures a shorter route and collects rewards more often; if it fails, it cannot complete its task.  If the probability that the robot is put out of commission is sufficiently low, then a strategy that maximizes the expected cumulative reward will try to disable the zapper, while a strategy that maximizes the probability of carrying out the task will choose the longer, safer route.  Let us assume the latter is desired. Finally, let us also assume that the robot should not re-enter its initial location more than a finite number of times.  Figure~\ref{fig:biolab} summarizes the specifications and details how they are expressed as rewards and promises.  In this case, promises are associated to states; i.e., to all transitions emanating from the designated states.  No lookbacks are necessary, though the promise made in the dirty area could be turned into a guard on the entrance to the clean lab.

The combination of $\omega$-regular properties and rewards makes for a flexible and natural way to describe the objective of the decision maker.  There may seem to be redundancy in the specification: why rewarding the robot for visiting the labs if it is already forced to visit them by the $\always\eventually$ requirements?  However, a proper combination of $\omega$-regular and quantitative specifications may give strategies that simultaneously optimize short-term (discounted) reward and guarantee satisfaction of long-term goals (when such strategies exist).  Without the $\omega$-regular requirement, the robot of Fig.~\ref{fig:biolab} would try its luck with the zapper.  Without the reward collected on each visit to the dirty lab, the robot would only have $\varepsilon$-optimal strategies, which would postpone satisfaction of the $\omega$-regular part of the specification to avoid the decontamination fees.  Such postponement strategies are seldom practically satisfactory.  Formulating the problem as an $\omega$-regular decision process helps one prevent their occurrence.
The strategy shown in Figure~\ref{fig:biolab} is computed using RL
based on the techniques presented in this paper.
\end{example}

\paragraph{Contributions.} We introduce $\omega$-regular decision processes (ODPs) that generalize regular decision processes with prospective memory (promises) modeled as $\omega$-regular lookaheads. 
We show decidability (Theorem~\ref{theo:hard}) of the optimal discounted reward optimization problem for ODPs. In particular, we show that computing $\varepsilon$-optimal strategies is: 1) EXPTIME-hard when the lookaheads are given as universal co-B\"uchi automata (UCW) and 2) 2EXPTIME-hard when they are expressed in LTL.

A key construction of the paper is the translation of the lookaheads to a lexicographic optimization problem over MDPs. This construction creates a nondeterministic B\"uchi automaton (NBA) to test whether all promises made are almost surely fulfilled. This procedure involves a complementation procedure from UCAs to NBA. 
    To be able to use this reduction for model checking or reinforcement learning, a critical requirement is to design an NBA that is \emph{good-for-MDP} (GFM)~\cite{Hahn20}.
    We provide a rank based complementation construction to demonstrate that the resulting automata are GFM.   
    We also show that leading rank-based complementation procedures all deliver good-for-MDP automata, enabling off-the-shelf complementation constructions to be used for OMDPs.

    We have also implemented the proposed construction to remove $\omega$-regular lookaheads from the MDPs. To demonstrate the experimental performance of our reduction, we present experiments on randomly generated examples.

\section{Preliminaries} 
\label{sec:background}

\subsection{Markov Decision Processes}
Let $\DIST(S)$ denote the set of all discrete distributions over $S$.
 A Markov decision process (MDP) $\Mm$ is a tuple $(S, s_0, A, T, AP, L)$ where $S$ is a finite set of states, $s_0 \in S$ is the initial state, $A$ is a finite set of {\it actions},  $T\colon S \times A \to \DIST(S)$ is the {probabilistic transition function}, $AP$ is the set of {\it atomic propositions} (observations), and $L\colon S \to 2^{AP}$ is the {\it labeling function}.

For any state $s \in S$, we let $A(s)$ denote the set of actions that can be selected in
state $s$.
An MDP is a Markov chain if $A(s)$ is singleton for all $s \in S$.
For states $s, s' \in S$ and $a \in A(s)$, $T(s,
a)(s')$ equals $\Pr (s' | s, a)$.
A {\it run} of $\Mm$ is an $\omega$-word $\seq{s_0, a_1, s_1, \ldots} \in S
\times (A \times S)^\omega$ such that $\Pr(s_{i+1} | s_{i}, a_{i+1}) {>} 0$ for all $i
\geq 0$.
A finite run is a finite such sequence. 
We write $\Runs^\Mm (\FRuns^\Mm)$  for the set of runs (finite runs) of the MDP
$\Mm$  and $\Runs{}^\Mm(s) (\FRuns{}^\Mm(s))$  for the set of runs (finite runs) of
the MDP $\Mm$ starting from the state $s$.  We write $\last(r)$ for the last state
of finite run $r$.

We write $\Sigma \rmdef 2^{AP}$ for the alphabet of the set of labels. 
For a {\it run} $r = \seq{s_0, a_1, s_1, \ldots}$ we define the corresponding
labeled run as $L(r) = \seq{L(s_0), L(s_1), \ldots} \in (\Sigma)^\omega$.

\paragraph{Strategies.} A {\it strategy} in $\Mm$ is a function $\sigma \colon \FRuns \to \DIST(A)$ such that
$\supp(\sigma(r)) \subseteq A(\last(r))$, where $\supp(d)$ denotes the support
of the distribution $d$.
A strategy $\sigma$ is {\it pure} if $\sigma(r)$ is a point
distribution for  all runs $r \in \FRuns^\Mm$ and is {\it mixed} if $\supp(\sigma(r)) = A(\last(r))$ for  all runs
$r \in \FRuns^\Mm$.
Let $\Runs^\Mm_\sigma(s)$ denote the subset of runs $\Runs^\Mm(s)$ that
correspond to strategy $\sigma$ with initial state $s$.
Let $\Strat_\Mm$ be the set of all strategies.
We say that $\sigma$ is {\it stationary} if $\last(r) = \last(r')$ implies
$\sigma(r) = \sigma(r')$ for all finite runs $r, r' \in \FRuns^\Mm$.
A stationary strategy can be given as a function $\sigma: S \to \DIST(A)$.  
A strategy is {\it positional} if it is both pure and stationary.

\paragraph{Probability Space.} An MDP $\Mm$ under a strategy $\sigma$ results in a Markov chain $\Mm_\sigma$.
If $\sigma$ is finite memory, then $\Mm_\sigma$ is a finite-state
Markov chain.
The behavior of $\Mm$ under a strategy $\sigma$ from
$s \in S$ is defined on the probability space
$(\Runs^\Mm_\sigma(s), \Ff_{\Runs^\Mm_\sigma(s)}, \Pr^\Mm_\sigma(s))$ over
the set of infinite runs of $\sigma$ with starting state $s$.  Given a random variable 
$f \colon \Runs^\Mm \to \Real$, we denote by $\eE^\Mm_\sigma(s) \set{f}$ the
expectation of $f$ over the runs of $\Mm$ originating at $s$ that
follow $\sigma$.

\paragraph{Reward Machines.}
The learning objective over MDPs in RL is often expressed using a Markovian reward function, i.e. a function $\rho\colon S \times A \times S \to \Real$ assigning utility to transitions.
A {\it rewardful} MDP is a tuple $\Mm = (S, s_0, A, T, \rho)$ where $S, s_0, A,$ and $T$ are defined as for MDP, and $\rho$ is a Markovian reward function.
A rewardful MDP $\Mm$ under a
strategy $\sigma$ determines a sequence of random rewards
${\rho(X_{i-1}, Y_i, X_i)}_{i \geq 1}$, where $X_i$ and $Y_i$ are the
random variables denoting the $i$-th state and action, respectively.
For $\lambda \in [0, 1[$, the {\it discounted reward} $\EDisct(\lambda)^\Mm_\sigma(s)$ from a state $s \in S$ under strategy $\sigma$ is defined as
\begin{equation}
\lim_{N \to \infty} \eE^\Mm_\sigma(s) \Big\{\sum_{1 \leq i \leq N}
  \lambda^{i-1} \rho(X_{i-1}, Y_i, X_i)\Big\}.\label{eq1}
\end{equation}

We define the optimal discounted reward
$\EDisct^\Mm_*(s)$ for a state $s \in S$ as 
$\EDisct^\Mm_*(s) \rmdef \sup_{\sigma \in \Strat_\Mm} \EDisct^\Mm_\sigma(s)$.
A strategy $\sigma$ is discount-optimal if
$\EDisct^\Mm_\sigma(s) = \EDisct^\Mm_*(s)$ for all $s {\in} S$.
The optimal discounted cost can be computed in polynomial time~\cite{Put94}.  

Often, complex learning objectives cannot be expressed using Markovian reward signals. 
A recent trend is to resort to finite-state reward
machines~\cite{icarte2022reward}. 
A reward machine is a tuple $\Rr = (\Sigma, U, u_0, \delta, \rho)$
where $U$ is a finite set of states, $u_0 {\in} U$ is the starting state,
$\delta\colon U {\times} \Sigma \to 2^U$ is the transition function, 
and $\rho\colon U \times \Sigma \times U \to \Real$ is the reward function.
Given an MDP $\Mm = (S, s_0, A, T, AP, L)$ and a reward machine $\Rr = (2^{AP}, U, u_0, \delta, \rho)$,  their product 
$\Mm{\times}\Rr = (S{\times} U, (s_0,u_0), (A {\times} U),
T^\times, \rho^\times)$
is a rewardful MDP where the transition function 
$T^\times ((s,u), (a, u'))(({s}',{u}'))$ equals $T(s,a)({s}')$ if $u' {\in} \delta(u,L(s))$ and equals $0$ otherwise. Moreover, the reward function $\rho^\times((s,u), (a, u'), (s', u')) $ equals $\rho(u, L(s), u')$ if $(u,L(s),{u}') \in \delta$ and is $0$ otherwise.
For discounted objectives, the optimal strategies of
$\Mm{\times}\Rr$ are positional on $\Mm{\times}\Rr$, inducing 
finite memory strategies 
over $\Mm$ maximizing the learning objective given by $\Rr$. 

\subsection{Omega-Regular Languages}

A deterministic finite state automaton (DFA) is a tuple
${\mathcal A} = (\Sigma,Q,q_0,\delta,F)$, where $\Sigma$ is a finite
\emph{alphabet}, $Q$ is a finite set of \emph{states}, $\delta \colon Q \times \Sigma \to 2^Q$ is the \emph{transition function}, and $F \subset Q$ is the set of \emph{accepting (final) states}.
A \emph{run} $r$ of ${\mathcal A}$ on $w = w_0\ldots w_{n-1} \in \Sigma^*$ from an initial state $q_0 \in Q$
is a finite word $r_0, w_0, r_1, w_1, \ldots, r_n$ in
$Q \times (\Sigma \times Q)^*$ such that $r_0 = q_0$ and, for $0 < i \leq  n$, $r_i \in \delta(r_{i-1},w_{i-1})$.  
We write $\last(r)$ for the last state of the finite run $r$.
A run $r$ of ${\mathcal A}$ is \emph{accepting} if $\last(r) \in
F$. 
The \emph{language} $\Ll(\Aa, q)$ of ${\mathcal A}$ is the set of words in $\Sigma^*$ with accepting runs in ${\mathcal A}$ from $q$.

\paragraph{$\omega$-Automata.}
A (nondeterministic) \emph{B\"uchi automaton}  (NBA) is a tuple
${\mathcal A} = (\Sigma, Q, q_0, \delta, \gamma)$, where $\Sigma$ is a finite
\emph{alphabet}, $Q$ is a finite set of \emph{states}, $\delta \colon Q \times \Sigma \to 2^Q$ is the \emph{transition function}, and
$\gamma \colon Q \times \Sigma \to 2^Q$ with $\gamma(q,\sigma) \subseteq \delta(q,\sigma)$ for all $(q,\sigma) \in Q \times \Sigma$ are the \emph{accepting transitions}.
A \emph{run} $\run$ of ${\mathcal A}$ on $w \in \Sigma^\omega$ from the initial state $q_0 \in Q$ is an $\omega$-word $\run_0, w_0, \run_1, w_1, \ldots$ in
$(Q \times \Sigma)^\omega$ such that $\run_0 = q_0$ and, for all $i > 0$,
$\run_i \in \delta(\run_{i-1},w_{i-1})$.  
We write $\infi(\run)$ for the set of transitions that appear infinitely
often in the run $\run$.
A run $\run$ of an NBA ${\mathcal A}$ is \emph{accepting} if $\infi(\run)$ contains a transition from $\gamma$.
The \emph{language} $\Ll(\Aa, q)$ of ${\mathcal A}$ is the subset of words in $\Sigma^\omega$ that have accepting runs in ${\mathcal A}$ from $q$.
A language is $\omega$-\emph{regular} if it is accepted by a nondeterministic B\"uchi automaton.

A universal co-B\"uchi automaton (UCA) ${\mathcal A} = (\Sigma,Q,q_0, \delta,\gamma)$ is the dual of an NBA and its language can be defined using the notion of \emph{rejecting runs}. 
We call a transition in $\gamma$ \emph{rejecting} and any runs with a transition in $\gamma$ occurring infinitely often \emph{rejecting} runs.
The language $\Ll(\Aa, q)$ of a UCA ${\mathcal A}$ is the set of $\omega$-words starting from $q$ that do not have a rejecting run.
A UCA therefore recognizes the complement of a structurally identical NBA.

\paragraph{Good-for-MDP Automata.}
Given an MDP $\Mm$ and a NBA automaton $\Aa$, the probabilistic model checking problem is to find a strategy that maximizes the probability of generating words in the language of $\Aa$.
Automata-theoretic tools provide an algorithm for probabilistic model checking when the NBA satisfies the so-called \emph{good-for-MDP} property~\cite{Hahn20}.
An NBA $\mathcal A$ is called \emph{good-for-MDPs} if, for any MDP $\Mm$, controlling $\Mm$ to maximize the chance that its trace is in the language of $\mathcal A$ and controlling the syntactic product $\Mm {\times}\Aa$ (defined next) to maximize the chance of satisfying the B\"uchi objective are the same. In other words, for any given MDP, the nondeterminism of $\mathcal A$ can be resolved on-the-fly.

Given an MDP $\Mm = ( S, s_0, A, T, AP, L )$
and an (UCA or NBA) automaton $\mathcal{A} = (2^{AP}, Q, q_0, \delta,\gamma)$,
their \emph{product}
$\Mm \times \mathcal{A} = ( S {\times} Q, (s_0,q_0), A {\times} Q, T^\times, F^\times )$ is an MDP where the transition function 
$T^\times ((s,q),(a,q'))(({s}',{q}'))$ equals $T(s,a)({s}')$ if $(q,L(s,a,{s}'),{q}') {\in} \delta$ and it is $0$ otherwise.
The set of accepting transitions in the case of NBA or rejecting transitions in the case of UCA,
$F^\times \subseteq (S \times Q) \times (A \times Q) \times (S
\times Q)$, is defined by $((s,q),(a,q'),(s',q')) \in F^\times$
iff $(q,L(s,a,s'),q') \in F$ and $T(s,a)(s') > 0$.
A strategy $\sigma$ on the product induces a strategy $\sigma'$ on the MDP with the same value, and vice versa.  
Note that for a stationary
$\sigma$ on the product, the strategy $\sigma'$ on the MDP needs memory.

An {\it end-component} of an MDP $\Mm$ is a sub-MDP $\Mm'$ s.t. for every state pair $(s, s')$ in $\Mm'$ there is a strategy to reach $s'$ from $s$ with positive probability. 
A maximal end-component is an end-component that is maximal under
set-inclusion.
An accepting/rejecting end-component is an end-component that contains an accepting/rejecting transition.

\section{Omega-Regular Decision Processes}
\label{sec:omdp}
The Regular decision processes (RDPs)~\cite{Abadi21} depart from the Markovian assumption of MDPs by allowing transitions and reward functions to be {\it guarded} (retrospective memory) by a regular property of the history. 
To build on this idea, we propose $\omega$-regular decision processes (ODPs), where transitions and rewards are not only constrained by regular properties on the history but where the decision maker may also make {\it promises} (prospective memory) to limit their future choices in exchange for a better reward or evolution.
ODPs offer a convenient framework for non-Markovian systems by allowing the decision maker to combine $\omega$-regular objectives and scalar rewards.

For an automaton of any type, an \emph{automaton schema} $\mathcal A = (\Sigma,Q,\delta,F)$ (for DFA) or $\mathcal A = (\Sigma,Q,\delta,\gamma)$ (for NBAs or UCAs) is defined as an automaton without an initial state.
For an automaton schema $\mathcal A = (\Sigma,Q,\delta,\gamma)$ and a state $q \in Q$, we write $\mathcal A_q = (\Sigma,Q,q,\delta,\gamma)$ 
as the automaton with $q$ as initial state and 
$\Ll(\Aa, q)$ for its language.
We express various transition guards using a DFA schema (lookback automaton) and various promises using a UCA schema\footnote{\noindent\textbf{Why UCAs?} We opted for the use of UCAs, instead of NBAs, in our ODP framework due to the accumulation of promises during a run of an ODP. 
As new promises are made, previous promises must also be satisfied, leading to a straightforward operation on UCAs. However, this same operation on NBAs would result in alternating automata, adding an additional exponential blow-up to our construction.
UCAs are becoming increasingly prevalent in both the formal methods~\cite{finkbeiner2013bounded,filiot2009antichain,dimitrova2018maximum} and AI~\cite{camacho2018ltl,camacho2019strong} communities. They are often referred to as NBAs that recognize the complement language. It is worth noting that if an NBA $\Aa$ recognizes the models of an LTL or QPTL formula $\phi$, or any other specification logic with negation, then $\Aa$, read as a UCA, recognizes $\neg \phi$ and vice versa. Therefore, the same automata translations can be applied to these specification languages.} (lookahead automaton).

\begin{definition}[Omega-Regular Decision Processes]
An $\omega$-regular decision process (ODP) $\Mm$ is a tuple $(S, s_0, A, T, r, \Aa_a, \Aa_b, AP, L)$ where:
\begin{itemize}
    \item $S$ is a finite set of states,
    \item $s_0 \in S$ is the initial state, 
    \item $A$ is a finite set of {\it actions}, 
    \item $AP$ is the set of {\it atomic propositions},
    \item $L: S \to 2^{AP}$ is the {\it labeling function},
    \item $\Aa_b = (2^{AP}, Q_b, \delta_b, F_b)$ is a {\it lookback} DFA schema,
    \item $\Aa_a = (2^{AP}, Q_a, \delta_a, \gamma_a)$ is the {\it lookahead} UCA schema,  
    \item $T : S {\times} Q_b {\times} A {\times} Q_a {\to} \DIST(S)$ is the {transition function},  
    \item and $r: S {\times} Q_b {\times} A {\times} Q_a {\to} \Real$ is the {\it reward function}. 
\end{itemize}
An ODP with trivial lookahead $\Ll(\Aa_a, q) = \Sigma^\omega$, for every $q \in Q_a$, is a {\em regular decision process (RDP)}.
An ODP with trivial lookback $\Ll(\Aa_b, q) = \Sigma^*$, for every $q \in Q_b$, is a  lookahead decision process (LDP).
An ODP with trivial lookahead and lookback is simply an MDP. 
In these special cases, we will omit the trivial language from its description. 
\end{definition}

A {\it run} $\seq{s_0, (\beta_1, a_1, \alpha_1), s_1, (\beta_2, a_2, \alpha_2), \ldots} \in S
\times ((Q_b \times A \times Q_a) \times S)^\omega$ of $\Mm$ is an $\omega$-word such that $\Pr(s_{i+1} | s_{i}, (\beta_{i+1}, a_{i+1}, \alpha_{i+1})) {>} 0$ for all $i
\geq 0$.
A finite run is a finite such sequence. 
We say that a run $\seq{s_0, (\beta_1, a_1, \alpha_1), s_1, (\beta_2, a_2, \alpha_2), \ldots} \in S
\times ((Q_b \times A \times Q_a) \times S)^\omega$ is a {\it valid} run if for every $i \geq 1$ we have that $L(s_0)L(s_1)\cdots L(s_{i-1}) \in \Ll(\Aa_b, \beta_i)$ and 
$L(s_i)L(s_{i+1})\cdots \in \Ll(\Aa_a, \alpha_i)$.
The concepts of strategies, memory, and probability space are defined for the ODPs in an analogous manner to MDPs.
We say that a strategy $\sigma$ for an ODP is a \emph{valid} strategy if the resulting runs are almost surely valid.
Let $\vStrat_\Mm$ be the set of all valid strategies of $\Mm$.

The expected discounted reward $\EDisct(\lambda)^\Mm_\sigma(s)$ for a strategy in an ODP $\Mm$ is defined as in (\ref{eq1}).
We define the optimal discounted reward
$\EDisct^\Mm_*(s)$ for a state $s \in S$ as 
$\EDisct^\Mm_*(s) \rmdef \sup_{\sigma \in \vStrat_\Mm} \EDisct^\Mm_\sigma(s)$.
A strategy $\sigma$ is discount-optimal if
$\EDisct^\Mm_\sigma(s) = \EDisct^\Mm_*(s)$ for all $s {\in} S$.
Given $\varepsilon > 0$, we say that a strategy $\sigma$ is $\varepsilon$-optimal if
$\EDisct^\Mm_\sigma(s) \geq \EDisct^\Mm_*(s) - \varepsilon$ for all $s {\in} S$.
The key optimization problem for ODPs is to compute the optimal discounted reward and a discount-optimal strategy.
However, such strategy may not always exist as shown next. 
\begin{example}
Consider an ODP where one can freely choose the next letter from the alphabet $\{a,b\}$ and have a reward of $1$ for $a$ and $0$ for $b$. 
With each transition the lookback is trivial $\Ll(\Aa_b, q) = \Sigma^*$ and the lookahead is $\Sigma^*(b\Sigma^*)^\omega$ (infinitely many $b$'s).
While we cannot achieve the discounted reward of $\sum_{i=0}^\infty \lambda^i = \frac{1}{1-\lambda}$ with any valid strategy, we can get arbitrarily close to this value by, e.g., choosing $a$'s until a reward $>\frac{1}{1-\lambda}{-}\varepsilon$ is collected for any given $\varepsilon {>} 0$, and henceforth choose $b$'s.
While the optimal B\"uchi-discounted value is $\frac{1}{1-\lambda}$, no (finite or infinite memory) strategy can attain this value.
\end{example}

Throughout the rest of this paper, we will focus on the problem of computing optimal discounted values and $\varepsilon$-optimal strategies for ODPs. However, before we dive into the general problem, it is helpful to examine some important subclasses of ODPs.
\begin{theorem}[Removing Lookbacks~\cite{Abadi21}]
\label{theo:noLookingBack}
For any given RDP $\Mm = (S, s_0, A, T, r, \Aa_b)$, we can construct an MDP $\Nn = (S', s_0', A', T', r')$ such that the optimal discounted value starting from $s_0$ in $\Mm$, denoted by $\EDisct^{\Mm}_*(s_0)$, is equal to the optimal discounted value starting from $s_0'$ in $\Nn$, denoted by $\EDisct^\Nn_*(s_0')$. A finite-memory optimal strategy for $\Mm$ can be computed from an optimal strategy for $\Nn$.
\end{theorem}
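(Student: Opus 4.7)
The plan is to take a synchronous product of $\Mm$ with its lookback DFA schema $\Aa_b$. Since $\Aa_b$ has no distinguished initial state, the product must record, for \emph{each} potential guard $q \in Q_b$, the state reached by $\Aa_b$ when started in $q$ and driven by the sequence of labels observed so far. Concretely, let $\Gamma$ be the finite set of functions $\gamma \colon Q_b \to 2^{Q_b}$ and define $\Nn = (S', s_0', A', T', r')$ with $S' = S \times \Gamma$, $A' = A \times Q_b$, and $s_0' = (s_0, \gamma_0)$ where $\gamma_0(q) = \delta_b(q, L(s_0))$. An action $(a, \beta)$ is \emph{enabled} at $(s, \gamma)$ iff $\gamma(\beta) \cap F_b \neq \emptyset$, which is exactly the condition that the history observed so far lies in $\Ll(\Aa_b, \beta)$. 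Transitions of $\Nn$ inherit from $\Mm$: $T'((s,\gamma),(a,\beta))((s',\gamma'))$ equals $T(s,\beta,a)(s')$ provided $\gamma'(q) = \bigcup_{p \in \gamma(q)} \delta_b(p, L(s'))$, and $r'$ inherits from $r$ analogously.

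Next I would set up a value-preserving bijection between valid runs of $\Mm$ and runs of $\Nn$ that use only enabled actions. A straightforward induction on the length of the history shows that the set of valid guards at step $i{+}1$ in $\Mm$ coincides with the set of enabled guards at the matching product state $(s_i, \gamma_i)$ in $\Nn$. Since transition probabilities and per-step rewards agree by construction, every valid strategy of $\Mm$ lifts to a strategy of $\Nn$ that uses only enabled actions with the same expected discounted reward, and every such strategy of $\Nn$ projects back to a valid strategy of $\Mm$. Because $\Nn$ is a finite MDP, a positional optimum $\sigma^*$ exists and realizes $\EDisct^\Nn_*(s_0')$; unfolding $\sigma^*$ back to $\Mm$ yields a finite-memory optimal strategy whose memory state is precisely $\gamma \in \Gamma$, yielding $\EDisct^\Mm_*(s_0) = \EDisct^\Nn_*(s_0')$.

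The main obstacle I anticipate is ruling out that an optimal strategy on $\Nn$ ever exploits a disabled action, which would correspond to an invalid run in $\Mm$. This can be side-stepped either by syntactically restricting the action set at each $(s, \gamma)$ to the enabled actions (so policy search never sees disabled ones), or by assigning a sufficiently punitive reward to any use of a disabled action so that no optimal strategy picks one. Unreachable or deadlocked product states are either pruned or similarly penalized to be provably suboptimal. Once this is discharged, the argument collapses to the standard observation that product MDPs with deterministic companion automata preserve value and admit positional optima, and the desired finite-memory optimal strategy on $\Mm$ is read off from $\sigma^*$ by using $\gamma$ as the memory variable.
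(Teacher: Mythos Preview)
Your proposal is correct and follows essentially the same route as the paper: track, for each potential guard $q \in Q_b$, the state(s) reached by $(\Aa_b,q)$ on the observed prefix, and enable a guarded transition exactly when the tracked set meets $F_b$. The paper's proof is terser and first assumes w.l.o.g.\ that the schema is deterministic from every start state (so the memory is a partial map $\alpha\colon Q_b \to Q_b$ rather than your $\gamma\colon Q_b \to 2^{Q_b}$), but the construction and the conclusion are otherwise identical.
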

\begin{proof}
Simulating the lookback automaton $\Aa_b$ is a straightforward process. Without loss of generality, we can assume that $(\Aa_b, p)$ is deterministic for all $p \in Q_b$. We can simulate $\Aa_b$ by computing, for each state $p\in Q_b$, the state $\alpha(p)\in Q_b$ that has been reached so far by $(\Aa_b, p)$ on the current prefix (if it exists; otherwise, $\alpha(p)$ is undefined).
A transition of $(S, s_0, A, T, r, \Aa_b)$ with a lookback $r\in Q_b$ can be triggered whenever $F_b \cap \alpha(r)\neq\emptyset$.
\end{proof}

Moving forward, we will assume that the ODP we are working with has a trivial lookback.

\paragraph{Complexity.}
It is easy to see that the optimization problem for ODPs is EXPTIME-hard, even for lookahead MDPs.
This is due to the special case where the initial state of a lookahead MDP has no incoming transitions,
and we can assign a payoff of $1$ for the promise to satisfy a property given by a UCA and a $0$ reward in all other cases.
The problem then reduces to checking if the MDP can be controlled to create a word in the language of the UCA (or a model of the LTL formula) almost surely. 
If the specification can be satisfied almost surely, the expected reward will be $1$, while it will be $0$ otherwise.
When this property is expressed in LTL, the complexity increases to 2EXPTIME-hard \cite{Courco95}. 
Using the standard translation from LTL to NBAs and UCAs (e.g., \cite{Somenz00,BabiakKRS12}), the complexity becomes EXPTIME-hard for the former.

\begin{theorem}[Lower bounds]\label{theo:hard}
Finding an $\varepsilon$-optimal strategy for a lookahead decision process $\Mm_a = (S, s_0, A, T, r, \Aa_a)$ is EXPTIME-hard in the size of $\Aa_a$. If $\Aa_a$ is given as an LTL formula, the problem becomes 2EXPTIME hard.
\end{theorem}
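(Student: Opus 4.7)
The plan is to follow the sketch offered immediately before the statement: reduce from the known hardness of almost-sure MDP control against, respectively, UCA and LTL specifications. For LTL, deciding whether an MDP admits a strategy whose trace almost surely satisfies a given LTL formula is 2EXPTIME-hard in the size of the formula~\cite{Courco95}. The analogous problem with the specification presented as a UCA is EXPTIME-hard in the automaton size, either by a direct encoding of alternating polynomial-space Turing machines into UCAs, or by invoking the polynomial LTL-to-alternating and exponential alternating-to-UCA translations and observing that the 2EXPTIME lower bound in the LTL size becomes EXPTIME in the UCA size.

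The reduction I would use is as follows. Given an MDP $\Nn$ with initial state $n_0$ and a UCA $\Aa_{q_0}$ over the label alphabet of $\Nn$, construct a lookahead decision process $\Mm_a$ by adding a fresh initial state $s_0$ to the state space of $\Nn$. From $s_0$ provide two actions: a \emph{go} action that moves deterministically to $n_0$, carries reward $1$, and carries the promise $\Aa_{q_0}$; and a \emph{stay} action that self-loops on $s_0$ with reward $0$ and trivial promise. Transitions inside $\Nn$ are preserved with reward $0$ and trivial promise. The \emph{stay} option guarantees that valid strategies always exist, so $\EDisct^{\Mm_a}_*(s_0)$ is well defined. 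The LTL case is identical, using the LTL formula directly as the promise, and the size of $\Mm_a$ is linear in the input instance.

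Correctness is a two-way case analysis. If some strategy $\tau$ in $\Nn$ almost surely satisfies $\Aa_{q_0}$ from $n_0$, then ``play \emph{go}, then follow $\tau$'' is a valid strategy in $\Mm_a$ and collects the single discounted reward $1$, so $\EDisct^{\Mm_a}_*(s_0)=1$. Conversely, any valid strategy which plays \emph{go} with positive probability must, conditionally on that choice, satisfy the promise almost surely, and hence induces an almost-sure winning strategy in $\Nn$. So in the negative instance, every valid strategy from $s_0$ plays \emph{stay} forever almost surely, forcing $\EDisct^{\Mm_a}_*(s_0)=0$. Fixing any $\varepsilon < 1/2$, a returned $\varepsilon$-optimal strategy therefore has value above $1/2$ in the yes-instance and below $1/2$ in the no-instance; evaluating the returned finite-memory strategy on the product MDP is polynomial and distinguishes the two cases, so the source hardness transfers to $\varepsilon$-optimal strategy synthesis.

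The main obstacle I foresee is less the reduction itself, which is a clean gadget, than the supply of the UCA baseline result: the LTL 2EXPTIME lower bound is textbook and directly citable, but the matching EXPTIME lower bound for universal co-Büchi objectives on MDPs is seldom stated in exactly the form we want. I would prefer to lift it from the LTL bound via the polynomial-to-alternating-then-exponential-to-UCA translation rather than reprove it from alternating Turing machines; since the gadget around the hard instance adds only a constant number of states, actions and transitions, the linear blow-up preserves the EXPTIME and 2EXPTIME lower bounds claimed in the statement.
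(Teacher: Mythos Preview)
Your proposal is correct and follows essentially the same approach as the paper. The paper's argument, given in the paragraph immediately preceding the theorem, is exactly the reduction you describe: equip the initial state (assumed to have no incoming transitions) with a reward of~$1$ for promising the UCA/LTL property and~$0$ otherwise, so that the optimal value is~$1$ in the yes-instance and~$0$ in the no-instance; the LTL bound is cited from Courcoubetis--Yannakakis, and the UCA bound is obtained by the exponential LTL-to-UCA translation. Your addition of the \emph{stay} self-loop with trivial promise is a small but genuine refinement over the paper's sketch, since it guarantees that $\vStrat_{\Mm_a}$ is nonempty and hence that $\EDisct^{\Mm_a}_*(s_0)$ is well defined even in the no-instance; the paper leaves this implicit.
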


\section{Removing Lookaheads}
\label{sec:translation}
The objective of this section is to establish a matching upper bound for Theorem~\ref{theo:hard}.
To meet the technical requirement of satisfying the $\omega$-regular promises, we will translate them to good-for-MDP automata \cite{Hahn20}. 
As the objectives are represented as universal co-B\"uchi automata, two operations are required: promise collection and translation to good-for-MDP NBAs. 
Promise collection is a simple operation for universal automata that does not impact the state space. 
However, translating an ordinary nondeterministic automaton to a good-for-MDP automaton, or even checking if an automaton has this property, can be a challenging task \cite{DBLP:journals/corr/abs-2202-07629}. 
Complementation alone is a costly operation \cite{stacs/Schewe09}.

We show that leading rank-based complementation procedures can be used to produce good-for-MDP (GFM) automata. 
Therefore, any standard implementation for automata complementation can be utilized. 
However, we suggest using a strongly limit-deterministic variant to avoid unnecessary nondeterminism, which is known~\cite{Hahn20} to affect the  efficiency of RL. 
Recall that an NBA is called \emph{limit deterministic} if it is deterministic after seeing the first final transition.
A limit deterministic automaton is \emph{strongly} limit deterministic if it is also deterministic \emph{before} taking the first final transition.
\begin{definition}
An automaton is \emph{strongly limit deterministic} if its state set $Q$ can be partitioned into sets $Q_1$ and $Q_2$, such that $|\delta(q,\sigma) \cap Q_1| \leq 1$ for all $q \in Q_1$ and $\sigma \in \Sigma$ and $|\delta(q,\sigma)| \leq 1$ and $\delta(q,\sigma) \subseteq Q_2$ for all $q \in Q_2$ and $\sigma \in \Sigma$, and the image of $\gamma$ is a subset of $Q_2$.
\end{definition}

Strongly limit deterministic NBAs are often good for MDPs, but they need not be (see Appendix~\ref{sec:strongLDBA}).

\subsection{From ordinary to collecting UCAs}
\label{ssec:collect}
We need to construct a GFM automaton that checks whether \underline{all} promises made on the future development of the MDP are almost surely fulfilled. The first step is to transform the given UCA schema for testing individual promises into a UCA that checks whether all promises are fulfilled.
When the promises are provided as states (or, indeed as sets of states) of a given UCA schema ${\mathcal A} = (\Sigma,Q,\delta,\gamma)$ and a fresh state $q_0' \notin Q$ and $Q' = Q \cup \{q_0'\}$, we define the \emph{collection automaton}
${\mathcal C} = (\Sigma \times Q,Q',q_0',\delta',\gamma')$, whose inputs $\Sigma \times 2^Q$ contains the ordinary input letter and a fresh promise,
\begin{itemize}
    \item $\delta'(q,(\sigma,q')) = \delta(q,\sigma)$ and $\gamma'(q,(\sigma,q')) = \gamma(q,\sigma)$ for all $q,q' \in Q$,
    that is, for states in $Q$, the promise is ignored, and
    
    \item $\gamma'(q_0',(\sigma,q)) = \delta(q,\sigma)$ and $\delta'(q_0',(\sigma,q)) = \{q_0'\}\cup\delta(q,\sigma)$,
    that is, from the fresh initial state $q_0'$, we have a non-final transition back to $q_0'$ as well as transitions that, broadly speaking, reflect the fresh promise $q \in Q$.
\end{itemize}

Note that promises can be restricted to be exactly or at most one state.
The reason that the transitions from $q_0'$ to other states are final is that this provides slightly smaller automata in the complementation (and determinisation) procedure we discuss in this section; as they can be taken only once on a run, it does not matter whether or not they are accepting, which can be exploited in a `nondeterministic determinisation procedure' as in \cite{Schewe09/det}.

Note that this automaton is easy to adjust to pledging acceptance from sets of states by using $\gamma'(q_0',(\sigma,S)) = \bigcup_{q\in S}\delta(q,\sigma)$ and $\delta'(q_0',(\sigma,S)) = \gamma'(q,(\sigma,S)) \cup \{q_0'\}$; the proofs in this section are easy to adjust to this case.

\begin{theorem}
\label{thm:collection}
For a given UCA schema $\mathcal A$, the automaton $\mathcal C$ from above accepts a word $\varpi = (\sigma_0,q_0)(\sigma_1,q_1)(\sigma_2,q_2)\ldots$ if, and only if, it satisfies all promises.
\end{theorem}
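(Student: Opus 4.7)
The plan is to establish both directions by explicitly classifying the runs of $\mathcal{C}$ on $\varpi = (\sigma_0,q_0)(\sigma_1,q_1)\ldots$ and matching them to runs of $\Aa$ on suffixes of $\varpi$. The key structural observation I will rely on is that the self-loop $q_0' \to q_0'$ is the only outgoing non-$\gamma'$ transition from $q_0'$, so each run of $\mathcal{C}$ on $\varpi$ is either (a) the unique run that stays at $q_0'$ forever (no $\gamma'$-transition, trivially accepting in the co-B\"uchi sense), or (b) a run that defers at $q_0'$ through step $i-1$ for some $i \geq 0$, takes a single commit transition on letter $(\sigma_i,q_i)$ into some $p_1 \in \delta(q_i,\sigma_i)$ (necessarily a $\gamma'$-transition), and then traces a run $p_1, p_2, p_3, \ldots$ of $\Aa$ on the suffix $\sigma_{i+1}\sigma_{i+2}\ldots$.

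Next I will analyze the acceptance condition per class. Because $\gamma'$ agrees with $\gamma$ on states in $Q$, the $\gamma'$-transitions of a committing run of type (b) are exactly the one commit transition plus the $\gamma$-transitions of the $\Aa$-tail starting at $p_1$. Hence such a run is accepting in $\mathcal{C}$ iff its $\Aa$-tail visits $\gamma$ only finitely often. I then set up the natural bijection: for each $i$, the committing $\mathcal{C}$-runs that leave $q_0'$ at step $i$ correspond one-to-one with the $\Aa$-runs of the form $q_i \xrightarrow{\sigma_i} p_1 \xrightarrow{\sigma_{i+1}} p_2 \to \cdots$ on $\sigma_i\sigma_{i+1}\ldots$ starting at $q_i$. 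Via this bijection, acceptance of all committing runs that leave at step $i$ is equivalent to fulfilment of the promise $q_i$ on the suffix $\sigma_i\sigma_{i+1}\ldots$.

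Both implications then follow at once: $\mathcal{C}$ accepts $\varpi$ iff every committing run (for every $i$) is accepting, iff for every $i \geq 0$ every run of $\Aa$ from $q_i$ on $\sigma_i\sigma_{i+1}\ldots$ uses $\gamma$ only finitely often, iff every promise $q_i$ is fulfilled. The one subtle piece of bookkeeping I expect to double-check is the status of the commit transition $q_0' \to p_1$ (always in $\gamma'$) versus the \emph{replaced} initial transition $q_i \xrightarrow{\sigma_i} p_1$ inside $\Aa$ (possibly in $\gamma$): each contributes at most one extra rejecting transition, which is immaterial for the co-B\"uchi condition. Beyond this finite-versus-infinite accounting, the argument is a routine translation and no further machinery is needed.
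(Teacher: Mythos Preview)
Your proposal is correct and follows essentially the same approach as the paper's proof: both arguments hinge on the observation that every run of $\mathcal{C}$ either stays at $q_0'$ forever (and is not rejecting) or leaves $q_0'$ at some index $i$, after which it is in bijection with a run of $\mathcal{A}_{q_i}$ on the suffix $\sigma_i\sigma_{i+1}\cdots$, with the single commit transition being immaterial to the co-B\"uchi condition. The paper phrases this as two contrapositives (exhibiting an explicit rejecting run in each direction) while you phrase it via the run classification and bijection, but the content is identical.
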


\subsection{From UCAs to (GFM) NBAs}
\label{ssec:complement}
Next, we consider a variation of the standard level ranking \cite{DBLP:journals/tocl/KupfermanV01,DBLP:journals/ijfcs/FriedgutKV06,stacs/Schewe09}, which is producing a semi-deterministic automaton.
This automaton is a syntactic subset in that it has the same states as \cite{stacs/Schewe09}, but only a subset of its transitions.
Besides being strongly limit-deterministic, we show that it retains the complement language and is good-for-MDPs.
Our construction follows the intuitive data structure from \cite{stacs/Schewe09}. 
It involves taking transitions away from the automaton resulting from the construction in \cite{stacs/Schewe09}, so that one side of the language inclusions is obtained for free, while the other side is entailed by the simulation presented in the Appendix~\ref{app:simulate}.

\paragraph{Construction.}
We call a level-ranking function $f :S {\to} \mathbb N$ from a finite set $S\subseteq Q$ of states $S$-tight if, for some $n \leq |S|$, it maps $S$ to $\{0,1,\ldots,2n{-}1\}$ and onto $\{1,3,\ldots,2n{-}1\}$. 
We write $\mathcal{T}_S$ for the set of $S$-tight level-ranking functions.
We call $\rank(f) = \max\{f(q) \mid q \in S\}$
(the $2n{-}1$ from above) the \emph{rank} of $f$.
\begin{definition}[Rank-Based Construction]
\label{def:comp}
    For a given $\omega$-automaton $\mathcal A=(\Sigma,Q,I,\delta,\gamma)$ with $n=|Q|$ states, let $\mathcal C=(\Sigma,Q',\{I\},\delta',\gamma')$ denote the NBA where
\begin{itemize}
\item $Q' = Q_1 \cup Q_2$ with $Q_1 = 2^Q$ and
$Q_2 = \{\,(S,O,f,i) \in 2^Q \times 2^Q \times \mathcal{T}_S \times \{0,2,\ldots,2n-2\} \mid O \subseteq f^{-1}(i)\,\}$,
\item $\delta' = \delta_1 \cup \delta_2 \cup \delta_3$ with
\begin{itemize}
\item $\delta_1: Q_1 \times \Sigma \rightarrow 2^{Q_1}$ with $\delta_1(S,\sigma)= \{\delta(S,\sigma)\}$,
\item $\delta_2: Q_1 \times \Sigma \rightarrow 2^{Q_2}$ with $(S',O,f,i)\in \delta_2(S,\sigma)$
iff $ S'{=} \delta(S,\sigma)$, $O{=}\emptyset$, and $i{=}0$,

\item 
$\delta_3: Q_2 \times \Sigma \rightarrow 2^{Q_2}$ with $(S',O',f',i')\in \delta_3\big((S,O,f,i),\sigma\big)$
 iff the following holds:
 
$S' = \delta(S,\sigma)$ and we define the auxiliary function
$g\colon S' \to 2^{\{0,\ldots,2n-1\}}$ with $g(q)$ equals
\[
\set{j \mid q \in \delta(f^{-1}(j),\sigma)} \cup \set{2\lfloor j/2 \rfloor \mid q \in \gamma(f^{-1}(j),\sigma)}
\]

$f'$ is the $S'$-tight function with $f'(q) = \min\{g(q)\}$; if this function is not $S'$-tight, the transition blocks. Otherwise:
\begin{enumerate}
    \item[(1)] we set $O'' = \delta(O,\sigma) \cap {f'}^{-1}(i)$
    \item[(2)] if $O'' \neq \emptyset$, then $O' = O''$ and $i'=i$;
    \item [(3)] else
$i' {=} (i {+} 2) \mod (\rank(f') {+}1)$ and $O' {=} {f'}^{-1}(i')$ \end{enumerate}
\end{itemize}
\item
$\gamma'$ contains the transitions of $\delta_3$ from case (3) (the breakpoints)
as well as transitions from  $\{\emptyset\}$.
\end{itemize}
\end{definition}

\newcommand{\randomstatsnumtopformulae}{5}
\newcommand{\randomstatsnumrandomlychosen}{5}
\newcommand{\randomstatscomplementmemorylimit}{4294967296}
\newcommand{\randomstatscomplementtimeout}{600}
\newcommand{\randomstatsnumatomicpropositions}{4}
\newcommand{\randomstatstotalentries}{10000}
\newcommand{\randomstatssuccessfulruns}{9947}
\newcommand{\randomstatssuccessfulrunspercent}{99.47\%}

\newcommand{\randomstatsmeanautomatonnumstates}{4.09}
\newcommand{\randomstatsstdevautomatonnumstates}{2.91}
\newcommand{\randomstatsminautomatonnumstates}{1}
\newcommand{\randomstatsmaxautomatonnumstates}{34}

\newcommand{\randomstatsmeancomplementnumstates}{48367.45}
\newcommand{\randomstatsstdevcomplementnumstates}{760963.32}
\newcommand{\randomstatsmincomplementnumstates}{2}
\newcommand{\randomstatsmaxcomplementnumstates}{25107909}

\newcommand{\randomstatsmeancomplementtime}{0.36}
\newcommand{\randomstatsstdevcomplementtime}{6.09}
\newcommand{\randomstatsmincomplementtime}{2.8133392333984375e-05}
\newcommand{\randomstatsmaxcomplementtime}{263.7465920448303}

\newcommand{\randomstatsmeancomplementprunerejectingnumstates}{3748.94}
\newcommand{\randomstatsstdevcomplementprunerejectingnumstates}{129045.67}
\newcommand{\randomstatsmincomplementprunerejectingnumstates}{1}
\newcommand{\randomstatsmaxcomplementprunerejectingnumstates}{9152588}

\newcommand{\randomstatsmeancomplementprunerejectingtime}{0.01}
\newcommand{\randomstatsstdevcomplementprunerejectingtime}{0.11}
\newcommand{\randomstatsmincomplementprunerejectingtime}{1.9073486328125e-05}
\newcommand{\randomstatsmaxcomplementprunerejectingtime}{4.555199861526489}

\newcommand{\randomstatsmeancomplementlumpdetpartnumstates}{23.80}
\newcommand{\randomstatsstdevcomplementlumpdetpartnumstates}{211.58}
\newcommand{\randomstatsmincomplementlumpdetpartnumstates}{1}
\newcommand{\randomstatsmaxcomplementlumpdetpartnumstates}{9958}

\newcommand{\randomstatsmeancomplementlumpdetparttime}{0.01}
\newcommand{\randomstatsstdevcomplementlumpdetparttime}{0.35}
\newcommand{\randomstatsmincomplementlumpdetparttime}{1.6927719116210938e-05}
\newcommand{\randomstatsmaxcomplementlumpdetparttime}{31.428340911865234}

\newcommand{\randomstatsmeancomplementredirectlanguagenumstates}{7.77}
\newcommand{\randomstatsstdevcomplementredirectlanguagenumstates}{10.39}
\newcommand{\randomstatsmincomplementredirectlanguagenumstates}{1}
\newcommand{\randomstatsmaxcomplementredirectlanguagenumstates}{327}

\newcommand{\randomstatsmeancomplementredirectlanguagetime}{0.01}
\newcommand{\randomstatsstdevcomplementredirectlanguagetime}{0.40}
\newcommand{\randomstatsmincomplementredirectlanguagetime}{3.2901763916015625e-05}
\newcommand{\randomstatsmaxcomplementredirectlanguagetime}{24.359985828399658}

\newcommand{\randomstatsmeancomplementlumpallnumstates}{7.03}
\newcommand{\randomstatsstdevcomplementlumpallnumstates}{8.70}
\newcommand{\randomstatsmincomplementlumpallnumstates}{1}
\newcommand{\randomstatsmaxcomplementlumpallnumstates}{320}

\newcommand{\randomstatsmeancomplementlumpalltime}{0.00}
\newcommand{\randomstatsstdevcomplementlumpalltime}{0.00}
\newcommand{\randomstatsmincomplementlumpalltime}{1.5020370483398438e-05}
\newcommand{\randomstatsmaxcomplementlumpalltime}{0.003160238265991211}

\newcommand{\randomstatsmeancomplementtotaltime}{0.40}
\newcommand{\randomstatsstdevcomplementtotaltime}{6.39}
\newcommand{\randomstatsmincomplementtotaltime}{0.001255035400390625}
\newcommand{\randomstatsmaxcomplementtotaltime}{269.6073257923126}

\begin{table*}
\caption{\label{tab:stats_table}Statistics for randomly generated examples.
 \textbf{orig}: Number of states of the automaton generated by \texttt{ltl2tgba},
 \textbf{compl}: Number of states of the complement, 
\textbf{prune}: Number of states after removing states with empty language,
\textbf{lumpd}: Number of states after applying strong-bisimulation lumping in the final part of the automaton,
\textbf{lang}: Number of states after we identify language-equivalent states in the final part and redirect transitions from the initial part to a representative for each language,
\textbf{lumpa}: Number of states after applying {\it strong bisimulation} lumping for all states of the automaton,
\textbf{time}: total time in seconds. 
}
\begin{center}
\begin{tabular}{lrrrrrrr}
\toprule
 & \multicolumn{1}{c}{\textbf{orig}} & \multicolumn{1}{c}{\textbf{compl}} & \multicolumn{1}{c}{\textbf{prune}} & \multicolumn{1}{c}{\textbf{lumpd}} & \multicolumn{1}{c}{\textbf{lang}} & \multicolumn{1}{c}{\textbf{lumpa}} & \multicolumn{1}{c}{\textbf{time}}\\
\midrule
\textbf{mean} & 4.09 & 48,367.45 & 3,748.94 & 23.80 & 7.77 & 7.03 & 0.40\\
\textbf{stdev} & 2.91 & 760,963.32 & 129,045.67 & 211.58 & 10.39 & 8.70 & 6.39\\
\textbf{max} & 34.00 & 25,107,909.00 & 9,152,588.00 & 9,958.00 & 327.00 & 320.00 & 269.61\\
\bottomrule
\end{tabular}

\end{center}
\end{table*}

\begin{table*}[t]
\caption{\label{tab:stats_examples_table}Example formulas.  For the legend, see Table~\ref{tab:stats_table}.
}
\begin{center}
\small

\begin{tabular}{p{7.5cm}rrrrrrr}
\toprule
 \multicolumn{1}{c}{\textbf{formula}} & \multicolumn{1}{c}{\textbf{orig}} & \multicolumn{1}{c}{\textbf{compl}} & \multicolumn{1}{c}{\textbf{prune}} & \multicolumn{1}{c}{\textbf{lumpd}} & \multicolumn{1}{c}{\textbf{lang}} & \multicolumn{1}{c}{\textbf{lumpa}} & \multicolumn{1}{c}{\textbf{time}}\\
\midrule
\texttt{Fd U ((a <-> Gd) \& (c <-> Fb))} & 14 & 25,107,909 & 16,585 & 2,120 & 115 & 60 & 269.61\\
\texttt{((c xor Fd) R F(b \& c)) W Xd} & 13 & 20,484,339 & 59,150 & 1,005 & 30 & 13 & 127.77\\
\texttt{F((a W (1 U (d xor Xd))) R (a W c))} & 10 & 19,317,020 & 18,540 & 103 & 40 & 29 & 167.43\\
\texttt{X(1 U a) R F(!Gb \& (c W a))} & 11 & 18,492,964 & 294,249 & 502 & 32 & 15 & 111.25\\
\texttt{G(Xa xor (G(Gc xor Ga) M Xd))} & 14 & 18,129,540 & 9,152,588 & 909 & 80 & 73 & 112.71\\
\midrule
\texttt{!G(a \& c) | X!Xa} & 2 & 4 & 2 & 2 & 2 & 2 & 0.00\\
\texttt{XG(Gd U (!a \& (c M Ga)))} & 1 & 2 & 2 & 2 & 2 & 2 & 0.00\\
\texttt{!(b M c) -> (c \& X!b)} & 3 & 6 & 3 & 3 & 3 & 3 & 0.00\\
\texttt{(Ga -> b) U c} & 4 & 8 & 6 & 6 & 6 & 6 & 0.01\\
\texttt{(!c R Fb) U (Gd <-> GFb)} & 10 & 232,094 & 70,513 & 6,481 & 60 & 15 & 11.76\\
\bottomrule
\end{tabular}

\end{center}
\end{table*}

\begin{theorem}
[\citet{stacs/Schewe09}] Given an NBA $\mathcal A$, the NBA $\mathcal C$ from Definition~\ref{def:comp} recognizes a subset of the complement of the language of $\mathcal A$. i.e. $\mathcal L(\mathcal C) \subseteq \Sigma^\omega \setminus \mathcal L(\mathcal A)$.
\end{theorem}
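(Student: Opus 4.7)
The plan is to obtain this inclusion essentially for free from \citet{stacs/Schewe09}. The state set $Q'$, the updates of $O$ and $i$ in $\delta_3$, the startup transitions $\delta_1,\delta_2$, and the accepting set $\gamma'$ of Definition~\ref{def:comp} all coincide literally with those of Schewe's rank-based complementation; the only change is that Definition~\ref{def:comp} pins $f'(q) = \min g(q)$ (and blocks when the result fails to be $S'$-tight), whereas the original construction admits any $S'$-tight $f'$ bounded above by $\min g$. Hence every transition of $\mathcal C$ is a transition of Schewe's automaton $\mathcal C_{\mathrm{Sch}}$, every accepting run of $\mathcal C$ on $w$ is an accepting run of $\mathcal C_{\mathrm{Sch}}$ on $w$, and from $\mathcal L(\mathcal C_{\mathrm{Sch}}) = \Sigma^\omega \setminus \mathcal L(\mathcal A)$ the claimed inclusion drops out immediately.

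For a self-contained proof I would instead track a level-ranking certificate directly. Fix $w \in \mathcal L(\mathcal C)$ witnessed by an accepting run $\rho$ of $\mathcal C$, and suppose toward contradiction that $r = r_0 r_1 \ldots$ is an accepting run of $\mathcal A$ on $w$. The only accepting transitions in $Q_1$ emanate from $\emptyset$, which is absorbing under the subset update, so reaching it already rules out the existence of $r$; hence $\rho$ must enter $Q_2$ at some step $k_0$ and afterwards produce infinitely many breakpoints $(S_k, O_k, f_k, i_k)_{k \geq k_0}$, with $r_k \in S_k$ for every such $k$ by an immediate subset-construction induction. The definition of $g$ then forces $f_{k+1}(r_{k+1}) \leq f_k(r_k)$ (because $f_k(r_k) \in g(r_{k+1})$) and tightens this to $f_{k+1}(r_{k+1}) \leq 2\lfloor f_k(r_k)/2 \rfloor$ whenever the step $r_k \to r_{k+1}$ is accepting in $\mathcal A$; since such accepting steps occur infinitely often and the ranks are bounded by $2|Q|-1$, the sequence $f_k(r_k)$ stabilises at some even value $2j$.

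To close the argument I would pick a breakpoint step $k^\star$ past stabilisation at which $i_{k^\star}$ has just cycled to $2j$, which is always possible because breakpoints occur infinitely often and $\rank(f_k) \geq 2j$ (as $f_k(r_k) = 2j$). At that step $O_{k^\star} = f_{k^\star}^{-1}(2j) \ni r_{k^\star}$, and a short induction through case~(2) of $\delta_3$ shows $r_k \in O_k$ and $i_k = 2j$ for all $k \geq k^\star$, so $O''$ never empties again and no further breakpoint can occur, contradicting that $\rho$ is accepting. The main obstacle is the small compatibility check that our deterministic choice $f' = \min g$ really is a legal transition of Schewe's original construction; once this is in place both routes deliver the inclusion, and the harder reverse inclusion (together with the good-for-MDP property) is deferred to the simulation in Appendix~\ref{app:simulate} referenced by the construction.
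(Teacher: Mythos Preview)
Your first paragraph is exactly what the paper does: the theorem is stated as a citation, and the surrounding text justifies it by observing that Definition~\ref{def:comp} is a \emph{syntactic subset} of the construction in \cite{stacs/Schewe09} (same states, strictly fewer transitions because $f'=\min g$ is the unique choice permitted), so every accepting run of $\mathcal C$ is one of Schewe's automaton and the inclusion is immediate. That is the entire content of the paper's argument for this direction.

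Your self-contained second route is a welcome addition the paper does not spell out, and is essentially the standard level-ranking soundness argument. One point deserves a word more of care: the sentence ``which is always possible because breakpoints occur infinitely often and $\rank(f_k)\geq 2j$'' does not by itself guarantee that $i$ ever lands on $2j$ past stabilisation, since the modulus $\rank(f')+1$ can vary from breakpoint to breakpoint. The clean fix is to note that in the deterministic update $f'(q)=\min g(q)$ the maximal rank is \emph{non-increasing} along the $Q_2$-part of any run (every value in $g(q')$ is bounded by $\rank(f_k)$), hence eventually constant at some odd $R\geq 2j+1$; from that point the index cycles through $0,2,\ldots,R{-}1$ with fixed modulus $R{+}1$ and therefore hits $2j$ infinitely often. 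With this observation in place, your breakpoint-trapping induction goes through verbatim.
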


\begin{corollary}
\label{cor:complement} Given a UCA $\mathcal A$, the NBA $\mathcal C$ from Definition~\ref{def:comp} recognizes a subset of the language of  $\mathcal A$, i.e. $\mathcal L(\mathcal C) \subseteq \mathcal L(\mathcal A)$.
\end{corollary}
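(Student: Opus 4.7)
The plan is to derive Corollary~\ref{cor:complement} as an immediate consequence of the preceding theorem, using the syntactic duality between NBAs and UCAs established in the Preliminaries. Recall that a UCA $\mathcal{A} = (\Sigma, Q, q_0, \delta, \gamma)$ and the structurally identical NBA, which I will call $\mathcal{A}^\flat$, recognize complementary languages: a word $w$ is accepted by $\mathcal{A}$ as a UCA iff no run of $\mathcal{A}$ on $w$ visits $\gamma$ infinitely often, which is precisely the negation of the B\"uchi acceptance condition applied to $\mathcal{A}^\flat$. Hence
\[
\mathcal{L}_{\mathrm{UCA}}(\mathcal{A}) \;=\; \Sigma^\omega \setminus \mathcal{L}_{\mathrm{NBA}}(\mathcal{A}^\flat).
\]

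The second ingredient is the observation that Definition~\ref{def:comp} is purely syntactic: it consumes only the tuple $(\Sigma, Q, I, \delta, \gamma)$ and at no point refers to whether the transitions in $\gamma$ are to be read as accepting (B\"uchi) or rejecting (co-B\"uchi); they merely serve as markers inside the rank-update rule defining $g$. Consequently, feeding $\mathcal{A}$ (viewed as a UCA) into the construction produces literally the same NBA $\mathcal{C}$ as feeding $\mathcal{A}^\flat$ (viewed as an NBA) into it. Invoking the preceding theorem on $\mathcal{A}^\flat$ yields $\mathcal{L}(\mathcal{C}) \subseteq \Sigma^\omega \setminus \mathcal{L}_{\mathrm{NBA}}(\mathcal{A}^\flat)$, and substituting the duality identity of the previous paragraph gives $\mathcal{L}(\mathcal{C}) \subseteq \mathcal{L}_{\mathrm{UCA}}(\mathcal{A})$, which is the claim.

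I do not expect a combinatorial obstacle here: no new invariant, simulation, or rank analysis is required beyond what the cited theorem already provides. The one point that merits explicit verification, and the only thing I would be careful about when writing the proof, is confirming that Definition~\ref{def:comp} genuinely treats its input as an abstract transition system equipped with a distinguished set of marked transitions, so that the NBA-complementation theorem of Schewe transports verbatim to the UCA reading. Once this is noted, the corollary is a one-line deduction.
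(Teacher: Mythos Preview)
Your proposal is correct and matches the paper's intended derivation: the corollary is stated without proof immediately after the Schewe theorem, relying on the duality ``a UCA recognizes the complement of a structurally identical NBA'' established in the Preliminaries, together with the purely syntactic nature of Definition~\ref{def:comp}. Your explicit check that the construction does not inspect the semantic role of $\gamma$ is exactly the point that makes the transport from the NBA statement to the UCA statement valid.
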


Showing inclusion in the other direction (and thus language equivalence) can be done in two ways.
One way is to re-visit the similar proof from the complementation construction from \cite{stacs/Schewe09}.
It revolves around guessing the correct level ranking once it is henceforth tight, and this guess, and its corresponding run, is still possible.
However, as we need to establish that the resulting NBA $\mathcal C$ is good-for-MDPs, we take a different approach: we start from determinising the UCA $\mathcal A$ into a deterministic Streett automaton $\mathcal S$, using the standard determinisation from nondeterministic B\"uchi to deterministic Rabin automata \cite{Schewe09/det} (provided in Appendix \ref{app:determinise}).
It is then easy to see how an accepting run of $\mathcal S$ on a word can be simulated.
The proof details are given in Appendix \ref{app:simulate}.
\begin{theorem}
\label{theo:gfm}
For a given UCA $\mathcal A$, the NBA $\mathcal C$ from Definition~\ref{def:comp} is a language equivalent good-for-MDPs NBA.\qed
\end{theorem}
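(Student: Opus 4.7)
The plan is to establish both halves of the claim: (i)~language equivalence $\mathcal L(\mathcal C)=\mathcal L(\mathcal A)$, and (ii)~the good-for-MDPs property. Corollary~\ref{cor:complement} already gives $\mathcal L(\mathcal C) \subseteq \mathcal L(\mathcal A)$, so the remaining work is the reverse inclusion and the corresponding almost-sure agreement on products with MDPs. My strategy is to route both arguments through a deterministic automaton for $\mathcal L(\mathcal A)$.

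First, I apply Schewe's determinisation \cite{Schewe09/det} (recalled in Appendix~\ref{app:determinise}) to the dual NBA $\bar{\mathcal A}$ of $\mathcal A$. This yields a deterministic Rabin automaton $\mathcal R$ whose states are history trees and whose language is $\Sigma^\omega \setminus \mathcal L(\mathcal A)$. Dualising the acceptance condition on the same transition structure produces a deterministic Streett automaton $\mathcal S$ with $\mathcal L(\mathcal S) = \mathcal L(\mathcal A)$. The essential observation is that each history tree canonically projects to an $S$-tight level-ranking function on its label set $S \subseteq Q$: the nodes in left-to-right order give the odd ranks while the older-sibling gaps fix the even ranks. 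This is exactly the data on which Definition~\ref{def:comp} is built, so the state of $\mathcal S$ supplies the $(S,f)$-part of a target state in $Q_2$.

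For a word $w \in \mathcal L(\mathcal A)$, I exhibit an accepting run of $\mathcal C$ by simulating the unique run $\rho_S$ of $\mathcal S$. In $Q_1$, $\mathcal C$ deterministically tracks the subset $\delta(S_0,w_0\cdots w_{i-1})$. At some step $i^\ast$, I use $\delta_2$ to jump into $Q_2$ with $f^\ast$ taken from the $i^\ast$-th state of $\rho_S$, $O = \emptyset$, and $i = 0$. From then on, $\delta_3$ is deterministic, and the rule $f'(q) = \min\{g(q)\}$ refines the history-tree update; since the ranks produced by $\mathcal S$ provide one valid witness, the simulated $f$'s never exceed them and the tightness condition never blocks. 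Because $\rho_S$ satisfies the Streett condition, every Rabin pair of $\mathcal R$ fails, which forces the breakpoint set $O$ to empty infinitely often, so rule~(3) of $\delta_3$ (the accepting transitions) fires infinitely often. The detailed bookkeeping is exactly the content sketched in Appendix~\ref{app:simulate}.

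The principal obstacle is the GFM property, which I would prove by lifting the above simulation to any MDP $M$. Fix an optimal strategy $\tau$ on $M$ for producing words in $\mathcal L(\mathcal A)$; in the deterministic product $M \times \mathcal S$ under $\tau$, the probability of satisfying the Streett condition equals $\sup_\sigma \Pr^M_\sigma(\mathcal L(\mathcal A))$, so almost every trajectory eventually enters a bottom strongly connected component in which every Rabin pair fails. Because $\mathcal C$ is strongly limit-deterministic, the only nondeterminism is the $\delta_2$-jump, and by the previous paragraph committing to jump upon entry to such a component, with $f^\ast$ read off from the current $\mathcal S$-state, yields an almost-surely accepting $\mathcal C$-run. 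This jumping rule depends only on the current state of $M \times \mathcal S$, hence is realisable by a positional resolver on $M \times \mathcal C$; therefore the product $M \times \mathcal C$ attains the same optimal B\"uchi probability as the original $\mathcal L(\mathcal A)$-objective on $M$, which is exactly the GFM condition of \cite{Hahn20}. Combined with Corollary~\ref{cor:complement}, this gives equivalence and the GFM property simultaneously.
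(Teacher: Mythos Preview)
Your high-level route is the same as the paper's: both go through the deterministic Streett automaton $\mathcal S$ obtained by determinising the dual NBA.  The paper, however, packages the whole argument as a single \emph{simulation game} in the sense of \cite{Hahn20}: the spoiler plays on $\mathcal S$, the verifier on $\mathcal C$, and the crucial move is the spoiler's \emph{declaration} of which transitions she will take infinitely often.  That declaration is what fixes the henceforth-stable subtree $T_s$ of the history tree, and the level ranking the verifier jumps to is built from $T_s$, not from the full current history tree.  Winning this game then gives language inclusion and GFM-ness simultaneously by the simulation theorem of \cite{Hahn20}.

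Against this, two points in your write-up are genuine gaps rather than just stylistic differences.  First, your claim that ``each history tree canonically projects to an $S$-tight level-ranking function'' is not correct: a single history tree does not determine the ranking.  In the paper's proof one orders only the nodes of $T_s$ (the eventually-always-stable subtree) as $\tau_0,\ldots,\tau_{k-1}$ and sets $f(q)=2i{+}1$ if $q$ is hosted by $\tau_i$, and $f(q)=2i$ if $q$ lies in a child of $\tau_i$ that is \emph{not} in $T_s$.  Only with this $T_s$-dependent definition does one get the invariant that $\delta_3$ preserves the ranking and never blocks; jumping at an arbitrary $i^\ast$ with a ranking read off the current tree need not yield a tight function that is maintained thereafter.

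Second, your GFM step asserts that the jumping rule ``depends only on the current state of $M\times\mathcal S$, hence is realisable by a positional resolver on $M\times\mathcal C$.''  This inference fails: in $Q_1$ the $\mathcal C$-component is merely the reachable subset, which does not determine the history tree, let alone $T_s$; so the resolver you describe is not positional on $M\times\mathcal C$.  One can repair this by carrying the $\mathcal S$-state as finite memory and then invoking positional optimality for B\"uchi objectives, but the cleaner fix---and what the paper does---is to use the simulation-game criterion of \cite{Hahn20}, which directly yields GFM-ness once the verifier's winning strategy against the (deterministic, hence GFM) $\mathcal S$ is exhibited.
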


Noting that the construction in Definition~\ref{def:comp} is a language equivalent syntactic subset of  \cite{stacs/Schewe09}, which in turn is a language equivalent syntactic subset for older constructions~\cite{DBLP:journals/tocl/KupfermanV01,DBLP:journals/ijfcs/FriedgutKV06,stacs/Schewe09}, we obtain that the classic rank-based complementation algorithms result in GFM automata.
\begin{corollary}\label{cor:classicComplement}
Given an NBA $\mathcal A$, the rank-based complementation algorithms from \cite{DBLP:journals/tocl/KupfermanV01,DBLP:journals/ijfcs/FriedgutKV06,stacs/Schewe09} provide good-for-MDP automata.
\qed
\end{corollary}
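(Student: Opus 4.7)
The plan is to derive Corollary~\ref{cor:classicComplement} from Theorem~\ref{theo:gfm} by a simple monotonicity argument: enlarging the transition relation of a language-preserving GFM NBA preserves the GFM property. Let $\mathcal C$ denote the NBA from Definition~\ref{def:comp} applied to the UCA dual of $\mathcal A$, and let $\mathcal C'$ be any of the classical rank-based constructions from \cite{DBLP:journals/tocl/KupfermanV01,DBLP:journals/ijfcs/FriedgutKV06,stacs/Schewe09}. The excerpt already provides three facts we can take as given: (i) $\mathcal C$ and $\mathcal C'$ share the same state space and the transitions of $\mathcal C$ form a subset of those of $\mathcal C'$; (ii) $\mathcal L(\mathcal C) = \mathcal L(\mathcal C') = \Sigma^\omega\setminus\mathcal L(\mathcal A)$; (iii) by Theorem~\ref{theo:gfm}, $\mathcal C$ is good-for-MDPs.

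The first step is to unfold what has to be shown. By definition, $\mathcal C'$ is GFM iff for every MDP $\Mm$ the optimal probability of satisfying the B\"uchi condition on the syntactic product $\Mm\times\mathcal C'$ coincides with the optimal probability that a trace of $\Mm$ lies in $\mathcal L(\mathcal C')$. The inequality $\PSat(\Mm\times\mathcal C')\le \PSemSat(\Mm,\mathcal L(\mathcal C'))$ holds for any NBA, because every accepting product run projects to a word in the language and no product strategy can create accepting mass where the language does not exist.

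The second and key step is the opposite inequality. Fix an MDP $\Mm$ and an optimal (or $\varepsilon$-optimal) strategy $\sigma$ on $\Mm\times\mathcal C$ witnessing the GFM property of $\mathcal C$, so that its accepting value equals $\PSemSat(\Mm,\mathcal L(\mathcal C))$. Since the transitions of $\mathcal C$ are a subset of those of $\mathcal C'$, and the accepting transitions of $\mathcal C$ form a subset of those of $\mathcal C'$ (the added transitions only make acceptance easier), $\sigma$ can be lifted verbatim to a strategy $\sigma'$ on $\Mm\times\mathcal C'$ that never selects a transition absent in $\mathcal C$; its value in $\Mm\times\mathcal C'$ is the same as the value of $\sigma$ in $\Mm\times\mathcal C$. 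Using $\mathcal L(\mathcal C)=\mathcal L(\mathcal C')$, we conclude
\[
\PSat(\Mm\times\mathcal C') \;\geq\; \PSat(\Mm\times\mathcal C)\;=\;\PSemSat(\Mm,\mathcal L(\mathcal C))\;=\;\PSemSat(\Mm,\mathcal L(\mathcal C')).
\]
Combined with the trivial inequality, this gives equality, so $\mathcal C'$ is good-for-MDPs. Taking $\mathcal C'$ to be each of the three constructions in turn yields the corollary.

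I do not anticipate a serious obstacle: the argument is essentially a transfer lemma, and its only subtlety is making precise that the syntactic subset relation includes the \emph{accepting} transitions (so lifting $\sigma$ preserves value, not merely transition legality). This is immediate from the fact (stated in the paragraph preceding the corollary) that Definition~\ref{def:comp} is obtained from \cite{stacs/Schewe09} by \emph{removing} transitions, so any transition present in $\mathcal C$, accepting or not, is present and has the same accepting status in $\mathcal C'$. Hence the lifting is legitimate and the proof closes.
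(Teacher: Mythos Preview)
Your proposal is correct and is precisely the argument the paper intends: the paragraph preceding the corollary already states that Definition~\ref{def:comp} is a language-equivalent syntactic subset of \cite{stacs/Schewe09}, which is in turn a language-equivalent syntactic subset of the older constructions, and the \qed signals that GFM-ness transfers upward along such inclusions. You have simply spelled out this transfer explicitly via the strategy-lifting inequality $\PSat(\Mm\times\mathcal C')\geq\PSat(\Mm\times\mathcal C)=\PSemSat(\Mm,\mathcal L(\mathcal C))=\PSemSat(\Mm,\mathcal L(\mathcal C'))$, which is exactly the intended mechanism; the only cosmetic imprecision is that for the older constructions the state set of $\mathcal C$ embeds into (rather than equals) that of $\mathcal C'$, but your lifting argument needs only the embedding.
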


In Appendix \ref{app:opt}, we provide optimizations for this construction, showing in particular that (1) $\delta_2$ can be restricted to map all states to odd ranks and that (2) the state $q_0'$ from the collection automaton can always be chosen to be the sole state with maximal rank.
Further, we argue that safety and reachability objectives lead to subset and breakpoint constructions, respectively.

\subsection{Putting it all together}
Combining the selection of promises (Section \ref{ssec:collect}) and their efficient representation as a GFM automaton (Section \ref{ssec:complement}
), we can use them for model-checking and reinforcement learning \cite{Hahn20,Hahn19,Bozkur20,Hahn23}, including for hierarchical goals \cite{Bozkurt0P21,HahnPSSTW21}. 
\begin{theorem}\label{theo:complete}
The problem of finding (near) optimal control for a lookahead decision process $\Mm_a = (S, s_0, A, T, r, \Aa_a)$ can be done in time polynomial in $\Mm$ and is EXPTIME-complete in the size of $\Aa_a$, and 2EXPTIME-complete in the size of an LTL formula describing $\Aa_a$. \qed
\end{theorem}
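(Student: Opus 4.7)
The lower bounds are in place from Theorem~\ref{theo:hard}, so the plan is to establish matching upper bounds via a two-stage reduction to a lexicographic problem on a finite MDP. First, I would apply the collection construction of Section~\ref{ssec:collect} to turn the lookahead schema $\Aa_a$ into a single UCA $\mathcal C_{\mathrm{coll}}$ that accepts exactly those promise-annotated traces meeting \emph{every} promise (Theorem~\ref{thm:collection}); this step is linear in $|\Aa_a|$. Then I would apply the rank-based complementation of Section~\ref{ssec:complement} to obtain a language-equivalent good-for-MDP NBA $\mathcal C$ of size $2^{O(|\Aa_a|)}$ (Theorem~\ref{theo:gfm}).

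Next I would form the synchronous product $\Mm_a \times \mathcal C$, where at each step the strategy chooses an action of $\Mm_a$, a promise $\alpha \in Q_a$, and a successor of $\mathcal C$ consistent with the input letter $(L(s),\alpha)$; the rewards of $\Mm_a$ are lifted unchanged. The state space has size $|\Mm_a|\cdot 2^{O(|\Aa_a|)}$. The crucial point is that a strategy $\sigma$ in $\Mm_a$ is valid in the sense of Section~\ref{sec:omdp} exactly when its induced trace is almost-surely in $\Ll(\mathcal C_{\mathrm{coll}})$, and because $\mathcal C$ is good-for-MDP, the maximal probability of satisfying this language from $\Mm_a$ equals the maximal probability of satisfying the B\"uchi condition on the product. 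Restricting attention to strategies that achieve this probability to be $1$ (the set of valid strategies) amounts to first solving an almost-sure B\"uchi problem on the product, and then maximizing discounted reward inside the sub-MDP induced by the states and actions preserving almost-sure acceptance.

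Both components of this lexicographic problem are polynomial in the size of the product: almost-sure B\"uchi on a finite MDP reduces to computing maximal accepting end-components by standard graph analysis, and the discounted-reward optimum on the surviving sub-MDP is computed in polynomial time by linear programming~\cite{Put94}. Composing the two stages yields polynomial time in $|\Mm_a|$ and singly exponential time in $|\Aa_a|$, hence \textsc{ExpTime}; for a lookahead described by an LTL formula $\phi$, the standard translation to a UCA of size $2^{O(|\phi|)}$ adds one further exponential, giving \textsc{2ExpTime}. Matching with Theorem~\ref{theo:hard} gives completeness at both levels.

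The main subtlety I expect is handling the case where no valid strategy exists from some state, and justifying the ``near'' in \emph{near-optimal}. From states where the almost-sure B\"uchi set is empty there is no valid strategy, so the supremum in $\EDisct^{\Mm_a}_*$ is vacuous and can be reported accordingly; from other states, a valid strategy can be composed with an $\varepsilon$-optimal discounted strategy on a finite horizon by switching, after a large enough horizon, to a memoryless almost-sure winning strategy for the B\"uchi condition, using $\lambda<1$ to bound the tail contribution by $\varepsilon$. This is where good-for-MDPness is essential: it guarantees that the discount-optimal positional strategy on the product induces, back in $\Mm_a$, a finite-memory strategy with the same value that is almost-surely valid, so no power is lost by the complementation step.
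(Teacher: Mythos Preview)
Your proposal is correct and follows essentially the same route as the paper: the paper's ``proof'' is just the \qed following the sentence that combines the collection automaton of Section~\ref{ssec:collect} with the GFM complementation of Section~\ref{ssec:complement} and then defers to the cited lexicographic model-checking/RL papers, together with the lower bounds of Theorem~\ref{theo:hard}. You have accurately unpacked exactly these ingredients, including the product construction, the almost-sure B\"uchi step, the subsequent discounted optimization, and the LTL-to-UCA blowup; your added discussion of the $\varepsilon$-optimal switching strategy is a reasonable elaboration of what the paper leaves to the cited references.
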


\section{Experimental Results}
\label{sec:experimental}
Our experiments focus on showing that what could be a computational bottleneck (the size of resulting B\"uchi automaton) is not a showstopper. 
Once the automaton is produced, the scalability of our approach is similar to that of the lexicographic RL algorithm of~\cite{Hahn23}.

\vspace{0.4em}
\noindent\textbf{Efficiency of the Construction.} To obtain an estimate of the practical applicability of the complementation algorithm of Section~\ref{ssec:complement}, we implemented it and applied it to randomly generated formulas.
We generated a total of \randomstatstotalentries~random formulae using the \textsc{Spot} \cite{Duret-LutzLFMRX16} 2.11.3 tool \texttt{randltl} with \randomstatsnumatomicpropositions~atomic propositions each. We then converted each of these formulas to Büchi automata using \texttt{ltl2tgba}.
We used our prototypical tool to complement these automata with a timeout of \randomstatscomplementtimeout~seconds and were successful in \randomstatssuccessfulrunspercent~of the cases. We then applied several optimizations to reduce the number of states in the complement, all of which maintained the good-for-MDP property.
Table~\ref{tab:stats_table} provides statistics on our results, and Table~\ref{tab:stats_examples_table} provides individual values for some example runs. The first \randomstatsnumtopformulae\ entries are the ones for which the complementation led to the largest number of states, while the next \randomstatsnumrandomlychosen\ were randomly selected

As seen in Table~\ref{tab:stats_table}, the maximum number of complement states is more than a million, while the mean is much lower.
The standard deviation is quite high.
Looking at the data, this is because in most cases the number of states generated for the complement is relatively low, while in some cases it is very big.
As seen, all optimizations lead to a reduction, although the effect of applying bisimulation lumping to all states in the end is not as large as the other ones.
As seen in Table~\ref{tab:stats_examples_table}, in some cases the number of states was quite large.
However, after applying the optimizations described, we were able to further reduce the number of states to make the resulting automaton suitable for model checking or reinforcement learning.

\noindent \textbf{Case Study.} Our construction effectively reduces the optimization and RL problem for ODPs to lexicographic optimization/RL over MDPs. 
We combined our construction with the lexicographic $\omega$-regular and discounted objectives RL algorithm introduced in~\cite{Hahn23} to compute optimal policies shown in Figure~\ref{fig:biolab}.
It took $20$ mins on Intel $i7-8750H$ processor.

\section{Conclusion}
\label{sec:conclusion}
Successful reinforcement learning and optimal control often rely on the design of a suitable reward signal. While it's easy to design a reward signal as a function of the state and action for simpler problems, practical problems require non-Markovian rewards. Reward machines, formal specifications, and regular decision processes are some of the approaches used in this context.
We have introduced omega-regular decision processes (ODPs) as a formalism that provides great flexibility in specifying complex, non-Markovian rewards derived from a combination of qualitative and quantitative objectives. A key aspect of our approach is the ability for the decision maker to obtain rewards contingent upon the fulfillment of \emph{promises} in the language of expressive $\omega$-regular specifications.

Our algorithm reduces the ODP optimization problem to a lexicographic optimization problem over MDPs with $\omega$-regular and discounted reward objectives. This reduction is based on translating the collection semantics of promises to a good-for-MDPs B\"uchi automaton, which enables an automata-theoretic approach to optimization.
Our prototype implementation demonstrate the feasibility of our approach.

\bibliography{papers}

\appendix
\onecolumn

\section{Strongly Limit-Deterministic NBA}
\label{sec:strongLDBA}
\begin{proposition}
    \label{prop:strongLDBA}
    Strongly limit deterministic NBA may not be good-for-MDP.
\end{proposition}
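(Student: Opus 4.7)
The plan is to exhibit a concrete counterexample: a three-state strongly limit-deterministic NBA $\Aa$ paired with a three-state MDP $\Mm$ for which the optimal probability of producing a word in $\Ll(\Aa, p)$ strictly exceeds the optimal probability of satisfying the B\"uchi condition in the syntactic product $\Mm \times \Aa$.

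First, I would define $\Aa$ over alphabet $\Sigma = \{a,b,c\}$ with states $Q = \{p, q_a, q_b\}$, initial state $p$, and partition $Q_1 = \{p\}$, $Q_2 = \{q_a, q_b\}$. The transitions are: a deterministic self-loop at $p$, so $\delta(p,\sigma) \cap Q_1 = \{p\}$ for every $\sigma \in \Sigma$; two nondeterministic jumps $\{q_a, q_b\} \subseteq \delta(p,c)$; an accepting self-loop $\gamma(q_a,a) = \delta(q_a,a) = \{q_a\}$ with $\delta(q_a,b) = \delta(q_a,c) = \emptyset$; and symmetrically $\gamma(q_b,b) = \delta(q_b,b) = \{q_b\}$ with $\delta(q_b,a) = \delta(q_b,c) = \emptyset$. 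A routine inspection confirms the three defining conditions of strong limit determinism.

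Next, I would take the MDP $\Mm$ with states $\{s_0, s_1, s_2\}$, initial state $s_0$, labels $L(s_0) = \{c\}$, $L(s_1) = \{a\}$, $L(s_2) = \{b\}$, a single action at $s_0$ that goes to $s_1$ or $s_2$ with probability $1/2$ each, and self-loops at $s_1, s_2$. The almost-sure trace is $c \cdot a^\omega$ with probability $1/2$ and $c \cdot b^\omega$ with probability $1/2$; both words lie in $\Ll(\Aa, p)$ via the matching jump to $q_a$ or $q_b$, so the optimal probability of producing a word in $\Ll(\Aa, p)$ from $\Mm$ equals $1$.

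Finally, I would argue that in the product MDP the optimum is $1/2$. The sole opportunity to leave $p$ is the initial $c$-transition, since neither $s_1$ nor $s_2$ ever emits $c$ again. Any action at $(s_0,p)$ must commit to its B\"uchi successor \emph{before} the stochastic outcome $s_1$ versus $s_2$ is revealed: committing to $q_a$ succeeds only when the outcome is $s_1$, committing to $q_b$ only when it is $s_2$, and staying at $p$ precludes any subsequent accepting transition. By convexity, no randomized or history-dependent strategy can exceed $1/2$, so the optimal product probability is $1/2 < 1$, witnessing that $\Aa$ is not good-for-MDPs. The only delicate step is this last argument---showing that no clever strategy in the product recovers the lost probability---which becomes immediate once one observes that all randomness in $\Mm$ relevant to acceptance is resolved strictly after the unique moment at which $\Aa$ has the option to leave $Q_1$.
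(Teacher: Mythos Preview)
Your proof is correct and follows essentially the same approach as the paper: exhibit a strongly limit-deterministic NBA whose nondeterministic jump from $Q_1$ to $Q_2$ forces a premature commitment to an as-yet-unresolved random outcome. The paper's concrete witness differs only in detail---it uses a four-state \emph{universal} NBA (accepting every $\omega$-word) over $\{p,\neg p\}$ and leaves the witnessing MDP implicit, merely remarking that the automaton ``needs to predict the next letter on the fly when moving to the second phase''---whereas you use a three-state non-universal automaton together with an explicit MDP and a full $1$-versus-$\tfrac12$ computation.
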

\begin{proof}
The NBA shown in the figure below depicts a \emph{strongly limit deterministic} NBA that accepts every word, but is not good-for-MDP, because it needs to predict the next letter on the fly when moving to the second phase.
\begin{center}
\scalebox{0.8}{
\begin{tikzpicture}
    \node[state, initial, fill=safecellcolor] (S0) {$q_0$};
    \node[state, fill=safecellcolor] (S1) [above right=1cm and 2cm  of S0] {$q_1$};
    \node[state, fill=safecellcolor] (S2) [below right=1cm and 2cm of S0] {$q_2$};
    \node[state, fill=safecellcolor] (S3) [right=4cm of S0] {$q_3$};
    \path[->]
    (S0) edge [loop below] node {$\top$}(S0)
    (S3) edge [loop right] node[accepting dot,label={$\top$}]{}(S3)
    (S0) edge node {$\top$} (S1)
    (S0) edge[swap] node {$\top$} (S2)
    (S1) edge node {$p$} (S3)
    (S2) edge node[below right] {$\neg p$} (S3);
  \end{tikzpicture}
}
\end{center}    
\end{proof}

\section{Proof of Theorem~\ref{thm:collection}}
\label{sec:collection}
\begin{proof}
If $\varphi$ does not satisfy all promises, there is an $i\in \omega$ such that $w_i=\sigma_{i}\sigma_{i+1} \ldots$ is not in the language of $\mathcal A_{q_i}$, which in turn implies that there is a rejecting run
$q_{i},\sigma_{i},q_{i+1}',\sigma_{i+1},q_{i+2}'$ run of $\mathcal A_{q_i}$ on $w_i$.
But then 
\begin{multline*}
    q_0',(\sigma_0,q_0),q_0', \ldots, q_0', (\sigma_{i},q_{i}),q_{i+1}', (\sigma_{i+1},q_{i+1}),q_{i+1}', (\sigma_{i+2},q_{i+1}),q_{i+2}',\ldots
\end{multline*} 
is a rejecting run of $\mathcal C$ on $\varpi$.
Vice versa, if $\varpi$ is not accepted by $\mathcal C$, then there is a rejecting run
\begin{multline*}
q_0',(\sigma_0,q_0),q_0', \ldots, q_0', (\sigma_{i},q_{i}),q_{i+1}', (\sigma_{i+1},q_{i+1}),q_{i+1}', (\sigma_{i+2},q_{i+1}),q_{i+2}',\ldots    
\end{multline*}
 of $\mathcal C$ on $\varpi$, which has $q_0'$ appearing $i{+}1$ times for some $i \in \omega$. (Note that $q_0'$, being the initial state of $\mathcal C$, appears at least once on all runs; it can only be reached from itself, and hence only appears in an initial segment of the run; a run where it appears all the time is not rejecting.)
But then
the word $w_i=\sigma_{i}\sigma_{i+1} \ldots$ is not in the language of $\mathcal A_{q_i}$, because the run $q_{i},\sigma_{i},q_{i+1}',\sigma_{i+1},q_{i+2}'$ of $\mathcal A_{q_i}$ on $w_i$ is rejecting.
\end{proof}

\section{From UCAs to DSAs}
\label{app:determinise}

Efficient determinization of NBAs to deterministic Rabin (and thus of UCAs to deterministic Streett) automata is built around \emph{history trees} \cite{Schewe09/det}. History trees are an abstraction of the possible initial sequences of runs of a B\"uchi automaton $\mathcal A$ on an input word $\alpha$.
Our construction is taken from there, and the same construction with an illustrative example can be found in \cite{Schewe09/det}.
The details of the construction are needed for Lemma~\ref{lemma:simulation}.

An \emph{ordered tree} $T \subseteq \omega^*$ is a finite prefix and order closed subset of finite sequences of natural numbers.
That is, if a sequence $\tau=t_0,t_1, \ldots t_n \in T$ is in $T$, then all sequences
$s_0,s_1, \ldots s_m$ with $m\leq n$ and, for all $i \leq m$, $s_i \leq t_i$, are also in $T$.
For a node $\tau \in T$ of an ordered tree $T$, we call the number of children of $\tau$ its \emph{degree}, denoted by $\deg_T(\tau)=|\{i \in \omega \mid \tau \cdot i \in T\}|$.

A \emph{history tree}
for a given NBA $\mathcal A= (\Sigma,Q,I,\delta,\gamma)$ is a labeled tree $\langle T,l \rangle$, where $T$ is an ordered tree, and $l:T \rightarrow 2^Q\smallsetminus \{\emptyset\}$ is a labeling function that maps the nodes of $T$ to non-empty subsets of $Q$, such that
1) the label of each node is a proper superset of the union of the labels of its children, and
2) the labels of different children of a node are disjoint.
We call a node $\tau$ the \emph{host node} of a state $q$, if $q\in l(\tau)$ is in the label of $\tau$, but not in the label of any child of $\tau$.

\subsection{History Transitions}

For a given nondeterministic B\"uchi automaton $\mathcal A= (\Sigma,Q,I,\delta,\gamma)$, history tree $\langle T,l \rangle$, and input letter $\sigma \in \Sigma$, we construct the
\emph{$\sigma$-successor} $\langle \widehat{T},\widehat{l}\rangle$ of $\langle T,l \rangle$ in four steps.
In a first step
we construct the labeled tree $\langle T',l': T' \rightarrow 2^Q \rangle$ such that
\begin{itemize}
\item $\tau \in T' \supset T$ is a node of $T'$ if, and only if, $\tau \in T$ is in $T$ or $\tau =\tau' \cdot \deg_T(\tau')$ is formed by appending the degree $\deg_T(\tau')$ of a node $\tau' \in T$ in $T$ to $\tau'$,

\item the label $l'(\tau) = \delta(l(\tau),\sigma)$ of an old node $\tau \in T$ is the set $\delta(l(\tau),\sigma)=\bigcup_{q \in l(\tau)}\delta(q,\sigma)$ of $\sigma$-successors of the states in the label of $\tau$, and

\item the label $l'(\tau \cdot \deg_T(\tau')) = \gamma(l(\tau),\sigma)$ of a new node $\tau \cdot \deg_T(\tau)$ is the set of \emph{final} $\sigma$-successors of the states in the label of $\tau$.
\end{itemize}

After this step, each old node is labeled with the $\sigma$-successors of the states in its old label, and every old node $\tau$ has spawned a new child $\tau'=\tau\cdot\deg(\tau)$, which is labeled with the states reachable through accepting transitions.

The new tree is not necessarily a history tree: (1) nodes may be labeled with an empty set, (2) the labels of siblings do not need to be disjoint, and (3) the union of the children's labels do not need to form a proper subset of their parent's label.

In the second step, property (2) is re-established:
we construct the tree $\langle T',l'': T' \rightarrow 2^Q \rangle$, where $l''$ is inferred from $l'$ by removing all states in the label of a node $\tau'= \tau \cdot i$ and all its descendants if it appears in the label $l'(\tau \cdot j)$ of an older sibling ($j<i$).

Properties (1) and (3) are re-established in the third transformation step.
In this step, we construct the tree $\langle T'',l'': T'' \rightarrow 2^Q \rangle$ by (a) removing all nodes $\tau$ with an empty label $l''(\tau)=\emptyset$, and (b) removing all descendants of nodes whose label is partitioned by the labels of its children from $T'$.
(We use $l''$ in spite of the type mismatch, strictly speaking we should use its restriction to $T''$.)
We call the greatest prefix and order closed subset of $T''$ the set of \emph{stable} nodes and the stable nodes whose descendants have been deleted due to rule (b) \emph{collapsing}.

The tree resulting from this transformation satisfies the properties (1)--(3), but it is no longer order closed.
In order to obtain a proper history tree, the order closedness is re-established in the final step of the transformation.
We construct the \mbox{$\sigma$-successor} $\langle \widehat{T},\widehat{l}: \widehat{T} \rightarrow 2^Q \smallsetminus \{\emptyset\}\rangle$ of $\langle T,l \rangle$ by ``compressing'' $T''$ to a an order closed tree, using the compression function $\comp:T'' \rightarrow \omega^*$ that maps the empty word $\varepsilon$ to $\varepsilon$, and $\tau \cdot i$ to $\comp(\tau) \cdot j$, where $j= |\{k < i \mid \tau \cdot k \in T''\}|$ is the number of older siblings of $\tau \cdot i$.
For this function $\comp:T'' \rightarrow \omega^*$, we simply set $\widehat{T} = \{ \comp(\tau) \mid \tau \in T''\}$ and $\widehat{l}(\comp(\tau)) = l''(\tau)$ for all $\tau \in T''$.
The nodes that are renamed during this step are exactly those that are not stable.

\subsection{Deterministic Acceptance Mechanism}
For an NBA $\mathcal A= (\Sigma,Q,I,\delta,\gamma)$, we call the history tree $\langle T_0,l_0 \rangle = \langle \{\varepsilon\}, \varepsilon \mapsto I \rangle$ that contains only the empty word and maps it to the initial states $I$ of $\mathcal A$ the initial history tree.

For an input word $\alpha:\omega \rightarrow \Sigma$
we call the sequence $\langle T_0,l_0 \rangle, \langle T_1,l_1 \rangle, \ldots$ of history trees that start with the initial history tree $\langle T_0,l_0 \rangle$ and where, for every $i\in \omega$, $\langle T_i,l_i \rangle$ is followed by $\alpha(i)$-successor $\langle T_{i+1},l_{i+1} \rangle$ the \emph{history trace} or $\alpha$.
A node $\tau$ in the history tree $\langle T_{i+1},l_{i+1} \rangle$ is called stable or collapsing, respectively, if it is stable or collapsing in the $\alpha(i)$-transition from $\langle T_i,l_i \rangle$ to $\langle T_{i+1},l_{i+1} \rangle$.

\begin{theorem}\cite{Schewe09/det}
\label{prop:accept}
An $\omega$-word $\alpha$ is accepted by a nondeterministic B\"uchi automaton $\mathcal A$ if, and only if, there is a node $\tau \in \omega^*$ such that $\tau$ is eventually always stable and always eventually collapsing in the history trace of $\alpha$.
\end{theorem}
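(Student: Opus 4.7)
The plan is to track each run $\rho = q_0 q_1 \ldots$ of $\mathcal A$ on $\alpha$ via its host sequence $(\tau_i)_{i\in\omega}$, where $\tau_i \in T_i$ is the unique node with $q_i \in l_i(\tau_i)$ and $q_i \notin l_i(\tau_i\cdot j)$ for any child. Two structural invariants drive the argument: (i) a freshly spawned child $\tau\cdot\deg_{T}(\tau)$ receives exactly those states reached by accepting transitions, and (ii) collapsing of $\tau$ at step $i$ means the updated label $l_{i+1}(\tau)$ is the disjoint union of the labels of its children. Together these give the key correspondence: a transition $q_i\to q_{i+1}$ is accepting iff, modulo sibling stealing and compression, the host of $q_{i+1}$ has migrated to a strictly deeper node than the image of $\tau_i$.

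For the $(\Leftarrow)$ direction, assume $\tau$ is stable from step $N$ onward and collapses at times $N \leq i_1 < i_2 < \ldots$. Build the finitely branching tree $\mathcal T_\rho$ whose depth-$j$ vertices are finite runs $q_N,\ldots,q_{N+j}$ of $\mathcal A$ on $\alpha$ whose $k$-th state is hosted in the subtree rooted at $\tau$ in $T_{N+k}$. Since $l_i(\tau)$ is nonempty for every $i \geq N$, this tree is infinite; König's lemma yields an infinite branch $\rho$. At each collapsing step $i_k$, non-accepting $\sigma$-successors of $l_{i_k}(\tau)$ end up in the updated label of $\tau$ itself, while the children collectively absorb the accepting successors. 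Because $l_{i_k+1}(\tau)$ is partitioned by its children, the host of $q_{i_k+1}$ lies strictly below $\tau$, forcing $q_{i_k}\to q_{i_k+1}$ to be accepting by invariant (i). Hence $\rho$ is accepting.

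For the $(\Rightarrow)$ direction, fix an accepting run $\rho$ with host sequence $(\tau_i)$. Order $\omega^*$ by the prefix relation and let $\tau$ be the minimal node such that $\tau \preceq \tau_i$ holds for all sufficiently large $i$; such a $\tau$ exists because each $T_i$ contains at most $|Q|$ nodes, so the set of prefixes-of-hosts stabilizes after finitely many steps. Stability of $\tau$ from this stage onward is forced: any compression that renamed or deleted $\tau$ would evict some $q_i$ from $\tau$'s subtree, contradicting $\tau \preceq \tau_i$. For the always-eventually-collapsing property, observe that every accepting transition of $\rho$ moves its host strictly below its previous position (invariant (i)), whereas depth in $T_i$ is bounded by $|Q|$; since $\rho$ has infinitely many accepting transitions, the host must be repeatedly ``pulled back up'' toward $\tau$. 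By the prefix-minimality of $\tau$, each such pullback corresponds to a collapse at $\tau$ (a collapse at a proper ancestor would already remove $\tau$ itself from the tree), yielding infinitely many collapsing events at $\tau$.

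The main obstacle, which I would isolate as a preparatory lemma for $(\Rightarrow)$, is the handling of sibling stealing under compression: after an accepting transition, the host of $q_{i+1}$ is a priori the image of an older sibling rather than the freshly spawned child, so the host can jump leftward in lex order rather than purely downward. The lemma asserts that such leftward jumps occur only finitely often along any fixed run, by a counting argument bounded by $|Q|$ sibling indices at each of at most $|Q|$ tree positions. Once this monotonicity-after-stabilization is in hand, the prefix-liminf $\tau$ is well defined and both directions reduce to the clean structural identities (i) and (ii) on labels and children supplied by the construction of the $\sigma$-successor history tree.
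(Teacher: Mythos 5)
The paper does not prove this statement itself --- it imports it from the cited determinisation construction --- so your attempt has to stand on its own, and it does not. The ``key correspondence'' you build both directions on (that $q_i\to q_{i+1}$ is accepting iff the host of $q_{i+1}$ migrates strictly below the image of the host of $q_i$) is false, because the history tree tracks \emph{states}, not runs: a node's label evolves as a $\delta$-image of its whole label, so $q_{i+1}$ can enter a child's label as a successor (accepting or not) of \emph{some other} state already hosted there. Concretely, take $Q=\{p,q\}$ over a one-letter alphabet with an accepting loop on $p$, a non-accepting edge $p\to q$, a non-accepting loop on $q$, and $I=\{p\}$: the root's label becomes $\{p,q\}$ and collapses every other step, yet the run $p,q,q,q,\ldots$ stays in the root's label forever and takes no accepting transition after the first step. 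This breaks your $(\Leftarrow)$ argument at its core: the infinite branch K\"onig's lemma hands you need not be accepting, and the transition taken \emph{at} a collapsing step need not be accepting (indeed, immediately after a collapse all descendants of $\tau$ are deleted, so the host of $q_{i_k+1}$ is $\tau$ itself, not strictly below it). The correct argument is a backward one: every child of $\tau$ alive at the $k$-th collapse was spawned after the $(k{-}1)$-st, and a state enters a freshly spawned child only through an accepting transition of some run reaching it; hence every state in $l_{i_k+1}(\tau)$ is reachable from some state in $l_{i_{k-1}+1}(\tau)$ by a segment containing at least one accepting transition \emph{somewhere in that interval}. K\"onig's lemma must be applied to the tree of concatenations of such witnessing segments, not to the tree of all runs hosted in $\tau$'s subtree.

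The $(\Rightarrow)$ direction inherits the same problem and additionally rests on a false preparatory lemma: ``leftward jumps'' along a fixed run are \emph{not} globally finite, since below a node that collapses infinitely often the children are wiped out and respawned infinitely often, so sibling stealing at that level can recur forever; no $|Q|\times|Q|$ counting bound exists. (Also, the $\tau$ you want is the \emph{maximal} node under the prefix order with the stated property; the minimal one is always the root.) The standard repair works level by level: if a stable node $\tau$ eventually always containing the run's state stops collapsing, then after the next accepting transition the run's state lies in some child of $\tau$ and remains in children of $\tau$ with non-increasing index (stealing and compression only decrease it, and no ancestor collapse removes it), so the index stabilises at a child that is eventually always stable and still contains the run; recursing, the depth bound $|Q|$ forces some level to collapse infinitely often. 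Your overall architecture is recognisably the right one, but both directions currently rest on a run-level reading of a state-level data structure, and that reading is wrong.
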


\begin{corollary}
\label{cor:determinise}
An $\omega$-word $\alpha$ is accepted by a universal co-B\"uchi automaton $\mathcal A$ if, and only if, it is accepted by the determinstic Streett automaton $\mathcal S$ described earlier in this appendix that checks that, for every node $\tau \in \omega^*$,
$\tau$ is only finitely often collapsing or infinitely often unstable in the history trace of $\alpha$.
\end{corollary}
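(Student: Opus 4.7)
The plan is to derive the corollary by direct dualization of Theorem~\ref{prop:accept}. From the preliminaries, a UCA $\mathcal A = (\Sigma, Q, q_0, \delta, \gamma)$ and the structurally identical NBA recognize complementary languages, and the history-tree construction of this appendix depends only on the tuple $(\Sigma, Q, \delta, \gamma)$, not on the acceptance mode. Hence the history trace of any input word $\alpha$ is intrinsic to $\mathcal A$ and is the same whether $\mathcal A$ is read as an NBA or as a UCA; the only thing that changes under dualization is how one reads the trace.

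First I would invoke Theorem~\ref{prop:accept} for $\mathcal A$ read as an NBA: $\alpha$ lies in that language iff there exists a node $\tau \in \omega^*$ that is eventually always stable and always eventually collapsing in the history trace. Complementing both sides gives $\alpha \in \Ll(\mathcal A)$ under the UCA reading iff for every node $\tau$, $\tau$ is \emph{not} eventually always stable, or $\tau$ is \emph{not} always eventually collapsing. Unfolding the two negations gives precisely the condition in the statement: $\tau$ is only finitely often collapsing or infinitely often unstable. Observe that this is exactly the Streett acceptance condition associated with one pair per node $\tau$, where the ``infinite'' component tracks instability of $\tau$ and the ``finite'' component tracks collapses of $\tau$.

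To package this as the deterministic Streett automaton $\mathcal S$, I would take the reachable history trees as states, the $\sigma$-successor map as the (functional) transition relation, and introduce a pair $(E_\tau, F_\tau)$ for each node $\tau$, with $E_\tau$ the set of transitions at which $\tau$ is unstable and $F_\tau$ the set of transitions at which $\tau$ is collapsing. Streett acceptance ``$F_\tau$ finitely often or $E_\tau$ infinitely often'' holds uniformly in $\tau$ on the history trace of $\alpha$ iff $\alpha$ is accepted, which is the claim. The main obstacle I anticipate is that $\omega^*$ is a priori infinite, whereas a Streett automaton should have only finitely many pairs. This is resolved by the standard observation that the labels of a history tree partition a subset of $Q$, so any reachable history tree has at most $|Q|$ nodes and all its coordinates lie in $\{0,\ldots,|Q|-1\}$; hence only finitely many nodes ever occur, and only the corresponding Streett pairs contribute. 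With this finiteness check in place, $\mathcal S$ is a finite deterministic Streett automaton, and the corollary follows.
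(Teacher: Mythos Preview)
Your argument is correct and matches the paper's intent: the corollary is stated without proof, immediately after Theorem~\ref{prop:accept}, precisely because it follows by the dualization you spell out (a UCA and the structurally identical NBA recognize complementary languages, and negating the Rabin condition node-by-node yields the stated Streett condition). One minor slip in your finiteness check: the labels of a history tree are nested rather than a partition of a subset of $Q$; the $|Q|$-node bound holds instead because each node hosts at least one state absent from all its children's labels (the label is a \emph{proper} superset of the union of the children's labels), which still gives the finitely many Streett pairs you need.
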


\section{Establishing Good-for-MDPness}
\label{app:simulate}
As a final step, we establish that the strongly limit deterministic NBA $\mathcal C$ from Section \ref{ssec:complement} is (1) good-for-MDPs and (2) language equivalent to the DSA $\mathcal S$ constructed in
Appendix \ref{app:determinise} by showing that it can simulate $\mathcal S$ with some extra information.

This simulation game is played between a verifier, who wants to prove GFM-ness of an automaton $\mathcal C$, and a spoiler, who wants to disprove it and plays on an automaton $\mathcal M$, which is GFM.
It was first suggested in \cite{Hahn20} and works as follows:
Spoiler and verifier start in the respective states of their automata and take turns in selecting transitions (letters and successor states), such that, in every round
\begin{enumerate}
    \item the spoiler chooses first and
    \item the verifier then follows by choosing a transition with the same letter. 
\end{enumerate}
Moreover, the spoiler will have to declare at some point of the game, which transitions she will choose infinitely often, and that she will not choose any other transition henceforth, such that runs that follow this declaration are accepting for $\mathcal M$.
The spoiler wins if
\begin{enumerate}
    \item she has made this declaration,
    \item she has followed this declaration, and
    \item the verifier does not construct an accepting run.
\end{enumerate}

Otherwise, the verifier wins.

\begin{lemma}\label{lemma:simulation}
The verifier wins this simulation game when he plays on the NBA $\mathcal C$ constructed from an UCA $\mathcal A$ (Section \ref{ssec:complement}),  while the spoiler plays with the Streett automaton $\mathcal S$ from 
Appendix \ref{app:determinise}.
\end{lemma}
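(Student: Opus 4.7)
The plan is to construct a verifier strategy on $\mathcal{C}$ that produces an accepting run whenever the spoiler plays according to an accepting Streett run of $\mathcal{S}$ on a word $\alpha$. By Corollary~\ref{cor:determinise}, the acceptance of $\alpha$ by $\mathcal{S}$ is equivalent to the statement that, in the history trace $\langle T_0, l_0\rangle, \langle T_1, l_1\rangle, \ldots$ of $\alpha$ through $\mathcal{A}$, every node $\tau \in \omega^*$ is either finitely often collapsing or infinitely often unstable. Since each $T_t$ has at most $|Q|$ nodes, the set $T^* \subseteq \omega^*$ of eventually-always-stable nodes is finite, and by the Streett condition each $\tau \in T^*$ is also eventually never collapsing. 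Pick $K$ large enough that, for all $t \geq K$, every $\tau \in T^*$ appears in $T_t$ and is stable and non-collapsing in the transition to $T_{t+1}$, while nodes in $T_t \setminus T^*$ are purely transient.

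Enumerate $T^* = \{\tau_1, \ldots, \tau_m\}$ by a pre-order DFS ordering $o : T^* \to \{1, \ldots, m\}$ in which ancestors and older siblings precede their descendants and younger siblings. The verifier's strategy is then: play $\delta_1$ in $Q_1$ for $t < K$; at time $K$, jump via $\delta_2$ into $(S_K, \emptyset, f_K, 0) \in Q_2$, where $f_K(q) = 2\,o(\tau_q^{(K)}) - 1$ is the odd rank associated to $q$'s host $\tau_q^{(K)}$ in $T_K$; and thereafter follow $\delta_3$ while maintaining the invariant $f_t(q) = 2\,o(\tau_q^{(t)}) - 1$. The function $f_t$ is $S_t$-tight because every node in $T^*$ hosts at least one state (its hosted set equals its label minus the labels of its children, which is non-empty by the history-tree properties), so the odd ranks $\{1, 3, \ldots, 2m - 1\}$ are exactly hit. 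With only odd ranks appearing, the $O$-component always satisfies $O'' = \emptyset$, so case (3) of $\delta_3$ fires at every step; every transition thus belongs to $\gamma'$, and the run is accepting.

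The main obstacle is to verify that this prescribed $f_{t+1}$ satisfies $f_{t+1}(q') = \min g_t(q')$ as required by $\delta_3$, in the presence of accepting transitions of $\mathcal{A}$: a predecessor $q$ with host $\tau$ and an accepting transition to $q'$ contributes the even value $2\,o(\tau) - 2$ to $g_t(q')$, which naively would force $f_{t+1}(q')$ below the prescribed odd rank. The resolution exploits stability: in the stable regime, the new child $\tau\cdot d$ spawned by the accepting transition cannot survive in $T_{t+1} = T^*$, so its contents are absorbed into some older-sibling subtree (a pre-existing child of $\tau$) or collapse. A careful case analysis of the tree transformations, together with the fact that predecessors contributing to $q' \in l_{t+1}(\tau_{q'}^{(t+1)})$ must themselves be hosted in the subtree of $\tau_{q'}^{(t+1)}$, shows that the even contribution is always dominated in $g_t(q')$ by a matching odd value coming from a non-accepting predecessor. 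Establishing this interplay between the tree transformation rules and the rank assignment is the delicate combinatorial core of the proof; once it is in hand, the remaining verification that every transition in the constructed run is a $\delta_3$-transition and that $\gamma'$ is hit infinitely often follows directly from the construction.
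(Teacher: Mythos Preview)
Your outline has the right shape—track the subset until the spoiler declares, then jump into $Q_2$ with a rank derived from the history tree—but two concrete errors break it.

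\textbf{The node ordering is inverted.} Your pre-order DFS gives the root index $1$ and hence rank $1$, with descendants receiving larger ranks. But $\delta_3$ forces $f'(q') = \min g(q')$: if $q'$ has one predecessor $q$ hosted at the root (rank $1$) and another predecessor $p$ hosted at a proper descendant, then $1 \in g(q')$ and so $f'(q') = 1$, whereas $q'$'s history-tree host is at least as deep as $p$'s, so your invariant demands $f'(q') > 1$. The paper uses the opposite order: each stable node gets a \emph{higher} index than all of its descendants and than its older siblings and their descendants, so the root carries the maximal rank $2k-1$. This is exactly what makes $\min g$ track the host, since along any $\mathcal A$-transition a state's deepest containing stable node can only stay put, move deeper, or move into an older sibling's subtree—all directions that \emph{decrease} the index under the paper's order.

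\textbf{The all-odd invariant cannot be maintained.} Once in $Q_2$ the verifier has no freedom: $f'$ is uniquely determined by $\min g$. If $q$ has odd rank $2i{+}1$ and $q' \in \gamma(q,\sigma)$, then $2i \in g(q')$; unless a rank $<2i$ is also present, $f'(q') = 2i$ is even. Your last paragraph claims a smaller odd value is always supplied by another predecessor, but this fails exactly when the accepting transition spawns a transient child of $\tau_i$ whose label is not absorbed by older siblings—then $q'$ has no predecessor of rank below $2i$. (Relatedly, your host $\tau_q^{(t)}$ in $T_t$ may itself be such a transient node, leaving $o(\tau_q^{(t)})$ undefined.) The paper embraces even ranks: it assigns $2i$ precisely to states lying in a non-stable child of $\tau_i$, and $2i{+}1$ only to states whose smallest containing stable node is $\tau_i$ itself. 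Acceptance is then argued not via ``every step is a breakpoint'' but by contradiction: if breakpoints cease at index $i'$, the $O$-component henceforth tracks the labels of the non-stable children of $\tau_{i'}$ that existed at that moment, and one such child would then survive forever—contradicting the spoiler's declaration that every node outside $T_s$ is unstable infinitely often.
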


\begin{proof}
To show this, we describe the winning strategy.
Before the spoiler declares, the verifier simply uses the subset construction from the first phase.
If the spoiler never declares, or declares a set of transitions such that some node is henceforth stable and infinitely often accepting, then the verifier wins automatically.
We now look at the case where the spoiler declares, and all nodes that are henceforth stable are never again accepting.

Say we have $k$ nodes that are henceforth stable. Note that they must build an order and prefixed closed tree $T_s$.
We order them to $\tau_0,\ldots,\tau_{k-1}$, such that every node has a higher prefix than all of its descendants, and every node has a higher index than its older siblings and their descendants. (There is exactly one such order. The root, for example, has index $k-1$ in it.)

If the reachable states of $\mathcal A$ are $S$ and the spoiler has moved to a state $\langle T,l \rangle$, then obviously $T_s \subseteq T$, and the verifier can move (using the same letter) to a state that uses the $S$-tight function $f$ determined as follows.

Let $i$ be the smallest index such that $q \in l(\tau_i)$. (Note that such an index exists, as $q$ is a reachable state and $q\in l(\tau_{k-1})$ must therefore hold.)

If $q$ is also in a label $l(\tau')$ for a child $\tau'$ of $q$ (which then must be a child that is not henceforth stable, as the henceforth stable children of $\tau_i$ have a smaller index), then we assign $f(q) \mapsto 2i$; otherwise we assign $f(q)=2i+1$. 

It is now easy to see that, so long as the nodes of $T_s$ are stable, this relationship between node labels for the states of $\mathcal S$ and ranks of $\mathcal C$ is maintained.

To see that the run of $\mathcal C$ the spoiler produces is accepting, we assume for contradiction that, with the $j$'s transition (after the declaration of the spoiler, where the verifier always traverses an accepting transition), the verifier's run has seen the last accepting transition, and that the states in the run henceforth are \begin{multline*}
    (S_{j},O_{j},f_{j},2i'),(S_{j+1},O_{j+1},f_{j+1},2i'),
(S_{j+2},O_{j+2},f_{j+2},2i'),\ldots,
\end{multline*} 
while the states of the run constructed by the spoiler in the same positions are $\langle T_{j},l_{j}\rangle\langle T_{j+1},l_{j+1}\rangle,\langle T_{j+2},l_{j+2}\rangle,\ldots$.

We first observe that $O_j$ contains the union of the states in the labels $l(\tau')$ of all children of $\tau_{i'}$ in $T_j \setminus T_s$ (the children of $\tau_i$ not declared henceforth stable in the declaration of the spoiler).
By a similar inductive argument as before, $O_{j'}$, for $j' \geq j$, contains the states in the label $l(\tau')$ of those children of $\tau_{i'}$ that are in $T_{j'} \setminus T_s$ \emph{and} have not been spawned no later than in step $j$ of the run.
Yet, this in particular means that the oldest child of $\tau'$ not in $T_s$ has been removed at most as many times as $\mathcal A$ has states (as $\tau'$ cannot have more then this number of children).
But if this is the case, this child of $\tau'$ is not marked as unstable infinitely many times as promised by the spoiler, so that the spoiler loses the game.
\end{proof}

This simulation lemma shows that the language of $\mathcal S$ is included in the language of $\mathcal C$ ($\mathcal L(\mathcal S) \subseteq \mathcal L(\mathcal C )$).
Together with Corollaries \ref{cor:complement} and \ref{cor:determinise}, this implies that $\mathcal A$, $\mathcal C$, and $\mathcal S$ are language equivalent.
As $\mathcal S$ is deterministic, it is also good-for-MDPs, and with this language equivalence, the simulation theorem from \cite{Hahn20} provides that $\mathcal C$ is good-for-MDPs.

\begin{corollary}
\label{cor:gfm}
For a given UCA $\mathcal A$, the automaton $\mathcal C$ constructed in Section \ref{ssec:complement} is a language equivalent good-for-MDPs NBA.\qed
\end{corollary}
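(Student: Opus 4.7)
The plan is to establish two assertions packed into the corollary: language equivalence $\mathcal L(\mathcal C) = \mathcal L(\mathcal A)$ and the good-for-MDP (GFM) property of $\mathcal C$. My strategy is to use the deterministic Streett automaton $\mathcal S$ from Appendix \ref{app:determinise} as a bridge: $\mathcal S$ is already language equivalent to $\mathcal A$ by Corollary \ref{cor:determinise}, and it is trivially GFM because it is deterministic. The simulation game won by the verifier in Lemma \ref{lemma:simulation} is exactly the tool that transports both properties from $\mathcal S$ to $\mathcal C$.

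For language equivalence, one inclusion is for free: Corollary \ref{cor:complement} gives $\mathcal L(\mathcal C) \subseteq \mathcal L(\mathcal A)$. For the converse, I would unpack Lemma \ref{lemma:simulation}. By the rules of the simulation game, a verifier who wins while playing on $\mathcal C$ against a spoiler playing on $\mathcal S$ can, on every word $\alpha$ for which the spoiler builds an accepting run of $\mathcal S$, match letter-for-letter to produce an accepting run of $\mathcal C$. This immediately yields $\mathcal L(\mathcal S) \subseteq \mathcal L(\mathcal C)$. Chaining with Corollary \ref{cor:determinise} and Corollary \ref{cor:complement}, I get $\mathcal L(\mathcal A) = \mathcal L(\mathcal S) \subseteq \mathcal L(\mathcal C) \subseteq \mathcal L(\mathcal A)$, so all three languages coincide.

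For the GFM property, I would invoke as a black box the simulation theorem of \cite{Hahn20}: if the verifier wins the simulation game between two language-equivalent automata and the spoiler's automaton is GFM, then the verifier's automaton is GFM as well. All three hypotheses are in hand: $\mathcal S$ is deterministic hence GFM, Lemma \ref{lemma:simulation} furnishes the winning strategy, and the language equivalence $\mathcal L(\mathcal C) = \mathcal L(\mathcal S)$ has just been established. Applying the theorem delivers that $\mathcal C$ is GFM, completing the corollary.

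The main obstacle is making sure the simulation game used in Lemma \ref{lemma:simulation} genuinely matches the protocol required by the GFM transfer theorem in \cite{Hahn20} — in particular that the spoiler's ``declaration'' phase, the letter-by-letter response, and the accepting/non-accepting win conditions line up with the formulation there. This alignment has been built into the game as stated in Appendix \ref{app:simulate}, so the remaining work is just bookkeeping: quoting the theorem with the correct instantiation and verifying that deterministic automata satisfy its GFM premise.
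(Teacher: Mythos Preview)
Your proposal is correct and mirrors the paper's own argument essentially step for step: Corollary~\ref{cor:complement} gives $\mathcal L(\mathcal C)\subseteq \mathcal L(\mathcal A)$, Lemma~\ref{lemma:simulation} gives $\mathcal L(\mathcal S)\subseteq \mathcal L(\mathcal C)$, Corollary~\ref{cor:determinise} closes the chain of equalities, and then determinism of $\mathcal S$ plus the simulation theorem of \cite{Hahn20} transfers the GFM property to $\mathcal C$. Your caveat about aligning the game protocol with \cite{Hahn20} is the only point not made explicit in the paper, but it is handled there exactly as you suggest, by having built the game of Appendix~\ref{app:simulate} to match that formulation.
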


\section{Optimizations and Special Cases}\label{app:opt}
In this appendix, we provide two independent optimizations on the size of the automaton and the consideration of safety and reachability as special cases.

\begin{enumerate}
    \item The first optimization is on the size of the statespace and refers particularly to the collection automaton from Section \ref{ssec:collect}.
When we look at the simulation lemma from
Appendix \ref{app:simulate}, we see that the states (or rather: the state) with the highest rank are those, who occur only in the root node -- and for the collection automaton, it is easy to see that this is $q_0'$, and $q_0'$ only.
We can therefore restrict the statespace to states, where the tight level rankings assign $q_0'$, and $q_0'$ only, the highest rank.

\item 
The second improvement works for all source automata, and restricts the transitions that leave the subset-construction part:
It is enough to use the transitions that map all states to odd ranks.
This is so because an automaton that uses only these transition to traverse from the first to the second deterministic part can simply simulate the transition from $S_{j}$ to $(S_{j+1},O_{j+1},f_{j+1},i_{j+1})$
by a transition to $(S_{j+1},O_{j+1}',g_{j+1},i_{j+1}')$, such that $e(q) = 2\lfloor f(q)/2\rfloor +1$ holds.
This results in runs
$S_0,\sigma_0,\ldots,S_j,\sigma_j,(S_{j+1},O_{j+1},f_{j+1},i_{j+1}),\ldots$ and 
$S_0,\sigma_0,\ldots,S_j,\sigma_j,(S_{j+1},O_{j+1}',e_{j+1},i_{j+1}'),\ldots$, where a simple inductive argument provides that the following holds for all $j' \geq j$ and all potential even ranks~$i$:
\begin{itemize}
    \item $e^{-1}(i) \supseteq f^{-1}(i)$ and
    \item $e^{-1}(i) \cup e^{-1}(i-1) = f^{-1}(i) \cup f^{-1}(i-1)$, and therefore also
    \item $e^{-1}(i-1) \subseteq f^{-1}(i-1)$.
\end{itemize}

The latter entails that, when the latter run is in index $i'$ and the former enters it, then the latter run will reach the next breakpoint not later than the former. Thus, if the former is accepting, so is the latter.

This leaves the two special cases of safety and reachability automata.
Reachability UCAs are UCAs with only final transitions, so that the collecting automaton has only final transitions, except for the self-loop on $q_0'$.
The GFM NBA resulting from it only needs ranks $0$ and $1$, with $q_0'$ being the only state with rank $1$. It is essentially a breakpoint construction, which checks that, except for the run that stays in $q_0'$, all runs are finite.

Safety UCAs are UCAs with a rejecting sink (a state with a final self-loop), while all other transitions are non-final.
If we consider the GFM NBA resulting from the collecting automaton to a safety UCA, then we can see that all states that contain this rejecting sink are non-productive and can be removed.
The resulting GFM NBA then only needs three ranks, $1$ through $3$, where only $q_0'$ has rank $3$, while all other reachable states have rank $1$.
We can, however, simply adjust the collection automaton for this case by making \emph{all} transitions from $q_0'$ non-final. 
This changes at most one transition on each run from final to non-final, and thus has no bearing on acceptance.

Using the same construction for this adjusted collection automaton will still result in a GFM NBA, where all states that contain the sink are non-productive and can be removed. It then needs only a single rank, $1$, and will have only accepting transitions in the second part. Moreover, we observe that the first part is just a non-accepting copy of the second, and we have essentially a subset construction (unsurprising for a safety property).
\end{enumerate}

\end{document}